\documentclass[11pt]{article}
\usepackage{sn-preamble} 
\usepackage{tikz}
\usepackage{verbatim}
\usepackage{cancel}
\usepackage{caption}
\usepackage[subpreambles]{standalone}

\usepackage{sectsty}
\partfont{\Large}

\usetikzlibrary{decorations.pathreplacing,angles,quotes}
\usetikzlibrary{shapes.geometric}
\usetikzlibrary{patterns}

\usepackage{mathtools} 
\usepackage{bm} 

\newcommand{\ignore}[1]{}

\newcommand{\ham}{\mathrm{ham}}
\newcommand{\Bin}{\mathrm{Bin}}
\newcommand{\col}{\mathrm{col}}

\newcommand{\reldist}{\mathsf{rel}\text{-}\mathsf{dist}}

\newcommand{\dtv}{\mathrm{d}_{\mathrm{TV}}}

\newcommand{\Dyes}{\calD_{\mathrm{yes}}}
\newcommand{\Dno}{\calD_{\mathrm{no}}}

\newcommand{\diff}{\mathrm{diff}}

\newcommand{\sphere}{\mathrm{Sphere}}

\newcommand{\MQ}{\mathrm{MQ}}
\newcommand{\SAMP}{\mathrm{SAMP}}

\begin{document}

\title{
Halfspaces are hard to test with relative error
}

\author{
\begin{tabular}{ccc}
    Xi Chen & Anindya De & Yizhi Huang \\
    Columbia University & University of Pennsylvania & Columbia University \\
    \url{xichen@cs.columbia.edu} & \url{anindyad@cis.upenn.edu} & \url{yizhi@cs.columbia.edu} \\
    \\
    Shivam Nadimpalli & Rocco A. Servedio & Tianqi Yang \\
    MIT & Columbia University & Columbia University \\
    \url{shivamn@mit.edu} & \url{rocco@cs.columbia.edu} & \url{tianqi@cs.columbia.edu}
\end{tabular}
}

\date{\today}

\pagenumbering{gobble}

\maketitle  

\begin{abstract}
Several recent works \cite{CDHLNSY25,CPPS25conjunctionDL,CPPS25junta,rel-error-DNF} have studied a model of property testing of Boolean functions under a \emph{relative-error} criterion.  In this model, the distance from a target function $f: \{0,1\}^n \to  \{0,1\}$ that is being tested to a function $g$ is defined relative to the number of inputs $x$ for which $f(x)=1$; moreover, testing algorithms in this model have access both to a black-box oracle for $f$ and to independent uniform satisfying assignments of $f$.  The motivation for this model is that it provides a natural framework for testing \emph{sparse} Boolean functions that have few satisfying assignments, analogous to well-studied models for property testing of sparse graphs.

The main result of this paper is a lower bound for testing \emph{halfspaces} (i.e.,~linear threshold functions) in the relative error model:  we show that $\tilde{\Omega}(\log n)$ oracle calls are required for any relative-error halfspace testing algorithm over the Boolean hypercube $\{0,1\}^n$.  This stands in sharp contrast both with the constant-query testability (independent of $n$) of halfspaces in the standard model \cite{MORS10}, and with the positive results for relative-error testing of many other classes given in \cite{CDHLNSY25,CPPS25conjunctionDL,CPPS25junta,rel-error-DNF}. Our lower bound for halfspaces gives the first example of a well-studied class of functions for which relative-error testing is provably more difficult than standard-model testing. 
\end{abstract}

\newpage

\setcounter{page}{1}
\pagenumbering{arabic}

\newpage

\section{Introduction} \label{sec:intro}

Since the early seminal works of \cite{BLR93,GGR98}, the field of \emph{property testing of Boolean functions $f: \zo^n \to \zo$} has developed into a rich and fruitful area of inquiry \cite{Goldreich17book}.  As the field has matured, researchers have gone beyond the original model (of testing using black-box oracle queries, with distance between functions measured using the uniform distribution over $\zo^n$) in a number of ways, by considering various refinements and extensions of the original model.  These variants include, among others, \emph{distribution-free} testing (first introduced by \cite{HalevyKushilevitz:07}, see \cite{CLSSX18, ChenPatel22, chen2024distribution, FlemingY20} for recent representative work); \emph{tolerant} testing (first introduced by \cite{parnas2006tolerant}, see \cite{DMN19, de2021robust, BCELR19, berman2022tolerant} for recent representative work); \emph{active} testing (first introduced by \cite{BBBY12}, see \cite{backurs2020active, SODACC} for recent representative work); and \emph{sample-based} testing (first introduced by \cite{GGR98}, see \cite{CDS20stoc, blais2021vc} for recent representative work).

The recent work of \cite{CDHLNSY25}  introduced a new model of Boolean function property testing, which is called \emph{relative-error} property testing.
To motivate this new model, recall that in the standard model of Boolean function property testing, a testing algorithm for a class of functions ${\cal C}$ makes  black-box oracle calls to the unknown function $f: \zo^n \to \zo$ that is being tested, and the goal is to distinguish between the two cases that (i) $f \in {\cal C}$, versus (ii) $f$ is $\eps$-far (under the uniform distribution) from every function in ${\cal C}$, i.e.
\[
\dist(f,{\cal C}) \geq \eps, \text{~~~where~}
\dist(f,{\cal C}):=\min_{g \in {\cal C}} \dist(f,g) \text{~and~}
\dist(f,g) := {\frac {|f^{-1}(1) \hspace{0.06cm} \triangle \hspace{0.06cm} g^{-1}(1)|}{2^n}}.
\]
While this standard model is elegant and natural, it is not well suited for testing \emph{sparse} functions, i.e.~functions which have few satisfying assignments, since any such function has very small uniform-distribution distance to the constant-0 function.  (This is analogous to how the standard ``adjacency-matrix-query'' model for graph property testing is not suitable for testing properties of sparse $N$-vertex graphs which have $o(N^2)$ edges.)

To remedy this, the relative-error property testing model, as defined in \cite{CDHLNSY25}, differs from the standard  model in the following ways:

\begin{flushleft}\begin{itemize}
\item 
The distance between the target function $f$ and a function $g$ is now measured using  \emph{relative distance}, which is simply a rescaled version of the uniform-distribution distance based on the sparsity of $f$:
\[
\reldist(f,g) := {\frac {|f^{-1}(1) \hspace{0.06cm} \triangle \hspace{0.06cm} g^{-1}(1)|}{|f^{-1}(1)|}}, \quad \text{i.e.}
\quad
\reldist(f,g) = \dist(f,g) \cdot {\frac {2^n}{|f^{-1}(1)|}}.
\]
This distance measure naturally captures the distance between $f$ and $g$ ``at the scale of $f$'' for all possible scales.\footnote{As noted in \cite{CDHLNSY25}, while $\reldist(f,g)$ is not symmetric, it is easy to verify that if $\reldist(f,g) = \eps \leq 1/2$ then $\reldist(g,f)$ is also $\Theta(\eps)$; so relative distance is symmetric up to constant factors in the setting we are interested in.
}

\item In the relative-error testing model, the testing algorithm can access a \emph{sample oracle} $\SAMP(f)$ (which takes no input and returns  a uniform random satisfying assignment $\bx \sim f^{-1}(1)$), as well as the usual black-box oracle $\MQ(f)$.\footnote{Observe that without a $\SAMP(f)$ oracle, it might be impossible to find any satisfying assignments at all of a sparse function $f$ without making a huge number of black-box queries to $f$.}

\end{itemize}
\end{flushleft}

The relative-error testing model makes it possible to meaningfully test \emph{sparse} Boolean functions for properties of interest; as we discuss below, a number of recent works 
\cite{CDHLNSY25,CPPS25conjunctionDL,CPPS25junta,rel-error-DNF} have studied the relative-error testability of natural Boolean function properties, giving various \emph{upper} bounds (efficient testing algorithms) in this model.  The present paper continues this line of investigation, but establishes a \emph{lower} bound, and moreover one which is qualitatively very different from previous results obtained in the relative-error testing model \cite{CDHLNSY25,CPPS25conjunctionDL,CPPS25junta,rel-error-DNF}. 

\medskip

\noindent {\bf Standard-model testing versus relative-error testing.}  It was shown in  \cite{CDHLNSY25} that standard-model testing is (essentially) never more difficult than relative-error testing. More precisely, \cite{CDHLNSY25} showed that for any class ${\cal C}$, if ${\cal C}$ is $\eps$-relative-error testable using $q$ oracle calls, then ${\cal C}$ is $\eps$-standard-model testable using $O(q/\eps)$ oracle calls.  (Roughly speaking, this is because if $|f^{-1}(1)| \gtrsim \eps 2^n$ then a call to the $\SAMP$ oracle can be simulated by drawing $O(1/\eps)$ many uniform random examples, while if $|f^{-1}(1)| \lesssim \eps 2^n$ then $f$ is $\eps$-close to the constant-0 function so it is easy to test.)

In the other direction, a simple example, given in Appendix A of \cite{CDHLNSY25}, gives a class ${\cal C}$ of $n$-variable Boolean functions which is trivially testable to any constant error in the standard model using zero queries, but which requires $2^{\Omega(n)}$ oracle calls in the relative-error model.  This property is rather contrived and unnatural, though;\footnote{The class ${\cal C}$ consists of all functions $g: \zo^n \to \zo$ for which the number of satisfying assignments $|g^{-1}(1)|$ is an integer multiple of $2^{2n/3}$; we refer the reader to \cite{CDHLNSY25} for the simple argument establishing the bounds claimed above.} in contrast with this artificial example, a growing body of recent results suggest that relative-error testing might \emph{not} be much more difficult than standard-model testing for properties corresponding to commonly studied classes of functions ${\cal C}$.
In particular, for many well-studied classes of functions such as monotone Boolean functions; 
conjunctions; decision lists; $k$-juntas; size-$s$ decision trees; $s$-term DNF formulas; and more, the number of oracle calls required for relative-error testing is  at most some fixed polynomial in the number of oracle calls required for testing in the standard model (see \Cref{table:results}).\footnote{In fact, for some classes such as monotone and unate functions, if the function $f$ being tested is sparse, then substantially \emph{fewer} oracle calls can suffice for relative-error testing as opposed to standard-model testing; see Table~1.}   This motivates the following question:

\begin{quote}
Do all ``natural'' Boolean function properties have a polynomial relation between the\\ number of oracle calls required for relative-error testing versus standard testing? 
\end{quote}

In this context, the current paper considers the relative-error testability of the class of \emph{halfspaces} (also known as linear threshold functions, or LTFs), which is one of the most well-studied classes of functions in property testing and learning theory.  In particular, while halfspaces are known to be testable in the standard model with a \emph{constant} number of queries \cite{MORS10}, we show a \emph{superconstant} (in fact, essentially logarithmic in dimension) lower bound in the relative error model. Thus, the class of halfspaces provides a strong negative answer to the question above. 
In light of the evidence to the contrary that is provided by \Cref{table:results}, we feel that this is a potentially surprising result.

\begin{table}  
\centering
\renewcommand{\arraystretch}{1.34}
  \begin{tabular}{ @{}p{0.3\textwidth}p{0.3\textwidth}p{0.3\textwidth}@{} }
 \toprule
Class of functions & Standard-model & Relative-error \\[-0.5em]
& testing complexity  & testing complexity\\
\midrule 
Monotone  functions   & $\tilde{O}(\sqrt{n}/\eps^2)$ \cite{KMS18},& $O((\log N)/\eps)$ \cite{CDHLNSY25}\\
&   $\tilde{\Omega}(n^{1/3})$ \cite{CWX17stoc} & $\tilde{\Omega}((\log N)^{2/3})$ \cite{CDHLNSY25}\\
 \hline
Conjunctions & $\Theta(1/\eps)$ \cite{PRS02} & $\Theta(1/\eps)$ \cite{CPPS25conjunctionDL}   \\
 \hline
Decision lists  & $\tilde{\Theta}(1/\eps)$ \cite{Bshouty20} & $\tilde{\Theta}(1/\eps)$ \cite{CPPS25conjunctionDL} \\
 \hline
$k$-juntas  & $\tilde{O}(k/\eps)$ \cite{Blaisstoc09}, & $\tilde{O}(k/\eps)$ \cite{CPPS25junta} \\
 &  $\tilde{\Omega}(k)$ \cite{ChocklerGutfreund:04,Saglam18} &  $\tilde{\Omega}(k)$ \cite{ChocklerGutfreund:04,Saglam18}\\
\hline
Size-$s$ decision trees  & $\tilde{O}(s/\eps)$ \cite{Bshouty20}, & $\tilde{O}(s/\eps)$ \cite{CPPS25junta} \\
  & $\Omega(\log s)$ \cite{CGM11} & $\Omega(\log s)$ \cite{CGM11} \\
 \hline
$s$-term DNF formulas  & $\tilde{\Theta}(s/\eps)$ \cite{Bshouty20} & $(s/\eps)^{O(1)}$ \cite{rel-error-DNF} \\
\hline
\rowcolor{yellow!30!white} {\bf Halfspaces}  & $\poly(1/\eps)$ \cite{MORS10}& $\tilde{\Omega}(\log n)$ {\bf [this work]}\\
& & $(n/\eps)^{O(1)}$  \cite{DDS15}\footnotemark\\
\bottomrule
\end{tabular}
\caption{Bounds on the number of oracle calls needed for relative-error testing and standard-model testing of various classes of $n$-variable Boolean functions, to accuracy $\eps$.  For relative-error testing, the entry indicates the total number of oracle calls to either the $\SAMP$ or the $\MQ$ oracle. The parameter $N$ is the number of satisfying assignments $|f^{-1}(1)|$ of the unknown function being tested; note that $N \leq 2^n$ always. 
}
\label{table:results}
\end{table}



We now give a detailed description of our contributions in this paper and of the context for them.

\subsection{Our contribution:  Relative-error testing of halfspaces}  

\noindent {\bf Background.}
Recall that a \emph{halfspace} is a function $f: \zo^n \to \bits, f(x) = \mathsf{sign}(w \cdot x - \theta)$. Halfspaces have been studied intensively in property testing, see e.g.~\cite{MORS10, MORS:09random, de2019your, de2021robust, Harms19, ChenPatel22}. 
This study is closely connected to other topics in theoretical computer science, such as: (a) \emph{learning} halfspaces, which has been studied since the introduction of the Perceptron algorithm in the 1960s 
\cite{Block:62,Novikoff:62}
through to the present day \cite{KKMS:08, Daniely16, diakonikolas2020complexity, diakonikolas2024efficient,chandrasekaran2024smoothed} as one of the most central problems in computational learning theory; and (b) \emph{structural properties} of halfspaces, which have been studied in many works on Boolean function analysis and probability theory such as \cite{MosselOdonnell:03, MosselNeeman15, MOO:10, Borell:85, DMN13, DFKO06}.

\footnotetext{See \Cref{sec:discussion} for an explanation of a trivial $\poly(n/\eps)$ upper bound from known relative-error learning results \cite{DDS15}.}


The principal result on testing halfspaces in the standard Boolean function property testing model, due to \cite{MORS10}, is that halfspaces are $\eps$-testable using only $\poly(1/\epsilon)$ queries --- i.e.,~they are ``constant-query testable'' independent of the ambient dimension $n$.   
Recall that the algorithmic results obtained to date for relative-error testing of many other concept classes
 --- see e.g.~the results for monotone functions, conjunctions, decision lists, $k$-juntas, size-$s$ decision trees, and $s$-term DNF formulas, due to \cite{CDHLNSY25,CPPS25conjunctionDL,CPPS25junta,rel-error-DNF}, that are given in \Cref{table:results} --- 
closely align with the standard-model testing results: for each of those classes, the number of oracle calls required for relative-error testing is  at most some fixed polynomial in the number of oracle calls required for testing in the standard model.
Given this, and given the $\poly(1/\eps)$-query testability of halfspaces in the standard model, it is natural to conjecture that 
halfspaces would similarly prove to be \emph{relative-error} $\eps$-testable with $\poly(1/\eps),$ or at worst some $O_{\eps}(1),$ complexity. 
However, as our main result, described below, we show that this is not the case.

\medskip
\noindent{\bf Our main result:  Halfspaces are hard to test in the relative-error model.} 
Our main result is a super-constant lower bound for relative-error halfspace testing over $\{0,1\}^n$:

\begin{theorem} [Boolean LTF lower bound] \label{thm:Boolean-lower}
For some constant $\eps_0>0$, any relative-error $\eps_0$-testing algorithm for  halfspaces over $\{0,1\}^n$ must either draw at least $\frac{0.05\log n}{\log \log n}$ samples from $\SAMP(f)$ or make at least $n^{0.01}$ non-adaptive queries to $\MQ(f)$.
\end{theorem}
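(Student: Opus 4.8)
The plan is to prove the lower bound via Yao's principle: I will construct a distribution $\Dyes$ supported on halfspaces and a distribution $\Dno$ supported on functions that are $\eps_0$-far (in relative distance) from every halfspace, such that no deterministic algorithm that draws fewer than $\tfrac{0.05\log n}{\log\log n}$ samples from $\SAMP(f)$ and makes fewer than $n^{0.01}$ non-adaptive $\MQ(f)$ queries can distinguish a draw from $\Dyes$ from a draw from $\Dno$ with constant advantage. Both distributions hide a small, structured set of satisfying assignments inside a random exponentially-small subcube. Concretely, fix a dimension parameter $\ell = \mathrm{polylog}(n)$ and a radius $m = \Theta(\log n/\log\log n)$, chosen so that $N := \binom{\ell}{m}$ is a fixed polynomial in $n$ (so all functions involved are genuinely sparse, $N \ll 2^n$). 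Draw a uniform injection $\sigma\colon[\ell]\hookrightarrow[n]$, a uniform assignment $\rho$ to the coordinates outside $\mathrm{im}(\sigma)$, and a uniform "center" $c\in\{0,1\}^\ell$; the \emph{core} is the subcube $C_{\sigma,\rho}=\{x : x_i=\rho_i\ \forall i\notin\mathrm{im}(\sigma)\}$, an affine copy of $\{0,1\}^\ell$, and outside the core every $f$ is identically $0$. In the yes-case the support inside the core is the Hamming ball $\{z : |z\oplus c|\le m\}$; since $|z\oplus c|$ is an affine function of $z$, this is a halfspace on the core, and taking the AND with the subcube indicator (via large weights on the coordinates outside $\mathrm{im}(\sigma)$) keeps it a halfspace on $\{0,1\}^n$, so every $f$ in $\operatorname{supp}(\Dyes)$ is a halfspace. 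In the no-case the support inside the core is a \emph{perturbed ball}: the exact ball of radius $m-1$ around $c$, together with a uniformly random half of the weight-exactly-$m$ shell around $c$. The parameters are tuned so that in both cases essentially all of the probability mass of a $\SAMP$ call lies on the weight-$m$ shell, and, conditioned on that, a sample is a uniformly random shell point (in the no-case since a uniform element of a uniformly random half of a set is a uniform element of the set; in the yes-case since the shell dominates the ball).

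I would then establish three claims. First, that with probability $1-o(1)$ the no-case function is $\eps_0$-far from every halfspace: this is the combinatorial heart. One argues that any halfspace $H$ that is $\eps_0$-close in relative distance to the sparse, tightly-clustered $\fno$ must have $|H^{-1}(1)|=\Theta(N)$ and must, restricted to the core, be a \emph{weighted Hamming ball} around (essentially) $c$, i.e.\ $\{z : \sum_{i : z_i\ne c_i}u_i\le t\}$ for positive reals $u$ — using (a) that a "large" halfspace is automatically far from anything of support size $N$, and (b) anti-concentration to rule out degenerate/skewed halfspaces that merely appear to cover the planted set. Such a halfspace traces out on the weight-$m$ shell a \emph{linear-threshold pattern} of $m$-sets, and there are only $2^{O(\ell m\log\ell)}$ such patterns (a VC-type count of halfspaces over the $N$ shell points viewed in $\mathbb{R}^\ell$). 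A union bound of this count against the probability that a uniformly random half of the shell lies within relative distance $\eps_0$ of a fixed pattern — at most $\exp(-\Omega(N))$ by concentration of hypergeometric overlaps — shows a random perturbation avoids all of them, provided $m$ exceeds an absolute constant. Second, that the non-adaptive $\MQ$ queries are useless: since the core is a uniformly random subcube of codimension $n-\ell$ and the support inside it is sparse, any query point chosen in advance lies in $f^{-1}(1)$ with probability $2^{-\Omega(n)}$, so all $n^{0.01}$ of them evaluate to $0$ with probability $1-o(1)$ under both $\Dyes$ and $\Dno$; thus they can be coupled away. Third, that the samples alone cannot distinguish: conditioned on the high-probability event that every drawn sample lands on the weight-$m$ shell (which fails with probability only $O(k\cdot m/\ell)$ after $k$ samples), the sample sequence is, in both cases, exactly $k$ i.i.d.\ uniform shell points lifted through $(\sigma,\rho,c)$ (modulo an $O(k^2/N)$ correction accounting for the slightly different collision probabilities of "uniform half" versus "whole shell"). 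Carefully bounding these failure events, the partial information the samples leak about the subcube (coordinates on which two samples disagree are revealed to be core coordinates), and the conditioning on the far-from-halfspaces event, shows the total variation distance of the algorithm's view is $o(1)$ for $k<\tfrac{0.05\log n}{\log\log n}$.

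The main obstacle is making the third step robust: because each sample exposes a growing chunk of the hidden subcube, one must verify that this partial knowledge — together with the non-adaptive $\MQ$ budget — still cannot be leveraged to locate or probe the $\Theta(N)$-sized "disagreement region" between the yes- and no-constructions (the half of the weight-$m$ shell where the two differ), and it is precisely this interaction — how much of the subcube $k$ samples reveal versus what a non-adaptive query budget of $n^{0.01}$ can do with that revelation — that pins down the $\log n/\log\log n$ threshold. A secondary obstacle is the structural lemma underlying the first step: ruling out \emph{every} halfspace (including highly unbalanced ones, and ones whose support is not obviously aligned with the planted set) as a relative-error-close approximator of $\fno$ requires a quantitatively robust version of the statement "weighted Hamming balls are the only halfspaces with a tightly clustered sparse support," together with the anti-concentration and counting estimates needed to make the ensuing union bound go through.
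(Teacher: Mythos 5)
Your proposal has a genuine gap, and in fact the no-distribution you chose cannot work for the model the theorem is about. In this model the $n^{0.01}$ queries, though non-adaptive among themselves, \emph{may depend on the samples}; your Claim (2) treats the query points as ``chosen in advance'' and argues they all evaluate to $0$, which silently drops exactly the interaction you flag in your final paragraph as the main obstacle. For your construction that interaction is fatal: in your no-case the labels on the weight-$m$ shell are (essentially) independent fair coins per shell point, so a tester can distinguish with $2$ samples and $O(m^2)\ll n^{0.01}$ queries. Concretely, take two samples $u^1,u^2$ (both on the shell w.h.p.), let $D$ be the set of coordinates where they disagree (all core coordinates, $|D|\approx 2m$, with about half of them being coordinates where $u^1$ disagrees with the hidden center $c$), and query $u^1\oplus e_i\oplus e_j$ for all pairs $i,j\in D$. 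Each such point is at core-distance $m-2$, $m$, or $m+2$ from $c$; in the yes-case the answers are $1,1,0$ respectively, so the fraction of $0$-answers is about $1/4$ and the $0$-pairs form a clique on $D\cap(A_2\setminus A_1)$, whereas in the no-case each on-shell query is a fresh shell point whose label is an independent coin, so about half the answers are $0$ and the clique structure is destroyed. This distinguishes yes from no with $1-o(1)$ probability, so your $\Dno$ is not hard against sample-dependent non-adaptive queries, no matter how the sample-only and distance-from-halfspace steps are executed.

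The paper avoids precisely this attack by making the no-function's labels on the sphere \emph{locally constant}: the sphere is partitioned into $s^{O(1)}$ pieces by $O(\log s)$ random hyperplanes, so any query within Hamming distance $n^{0.4}$ of a received sample lies (w.h.p.) in the same piece as that sample and is answered $1$ in both the yes- and no-case. The whole difficulty then shifts to showing that a deterministic map from the $s$ samples to a query point has only $o(1/q)$ probability of producing a point that is both on the sphere and far from all samples (the paper's main technical lemma, proved by conditioning on ``typical'' sample tuples and analyzing the posterior distribution of the center via careful binomial-ratio estimates); your sketch contains no analogue of this step, and it is where the $\tfrac{\log n}{\log\log n}$ threshold actually comes from (via the tuning $\delta^s=n^{-0.1}$), rather than from the subcube-hiding mechanism you describe. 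Your Claim (1) (distance from halfspaces via a VC-count against hypergeometric concentration) is plausible for your construction, but it does not rescue the argument given the failure above.
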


Using the standard fact that any algorithm making $q$ adaptive queries to $\MQ(f)$ can be simulated by an algorithm making $2^q$ non-adaptive queries, \Cref{thm:Boolean-lower} implies a lower bound against adaptive testers that draw fewer than $\tilde{\Omega}(\log n)$ samples and make fewer than $0.01 \log n$  queries.

\Cref{thm:Boolean-lower} reveals a dramatic difference between halfspaces and all of the function classes listed in \Cref{table:results}:  for each of those classes of functions, the number of oracle calls required for relative-error testing is  at most some fixed polynomial in the number of oracle calls used by the best known standard-model testers.  In contrast, the $\tilde{\Omega}(\log n)$ lower bound of \Cref{thm:Boolean-lower} together with the $\poly(1/\eps)$-query standard-model testing algorithm of \cite{MORS10} shows that there is an arbitrarily large gap between the complexities of standard-model versus relative-error testing for halfspaces.

Very roughly speaking, \Cref{thm:Boolean-lower} is established by constructing a suitable pair of distributions over yes-functions (halfspaces) and no-functions (functions that are relative-error far from every halfspace).  
In our constructions, the yes-functions are Hamming balls centered at random and~unknown uniform points $\bz \sim \{0,1\}^n$; the no-functions are more complicated and we do not describe them here, but they can also be thought of as being centered at random and unknown uniform points $\bz \sim \{0,1\}^n$.
At a very high intuitive level, the idea of our lower bound is that
$\tilde{\Omega}(\log n)$ random satisfying assignments are not enough to distinguish yes- functions from no- functions, and moreover are not enough to completely determine the center $\bz$. Hence after the samples have been received, there is still some uncertainty about the center $\bz$ in both the yes- case and the no- case; this uncertainty about $\bz$ is leveraged to show that even making $n^{0.01}$ many non-adaptive queries, after receiving the initial samples, does not suffice to distinguish yes-functions from no-functions. 

\medskip
\noindent {\bf Organization.}  \Cref{sec:lower-bound-techniques} gives a more detailed overview of the ideas in our lower bound.
\Cref{sec:preliminaries} covers a few simple preliminaries.  \Cref{sec:boolean-lower} gives the proof of \Cref{thm:Boolean-lower}:  \Cref{sec:yes} presents our yes- and no- distributions, \Cref{sec:sample-based} shows that testers which only draw $\tilde{\Omega}(\log n)$ samples (and no queries) cannot succeed, and \Cref{sec:queries-dont-help} shows that after $\tilde{\Omega}(\log n)$ many samples, even making $n^{0.01}$ non-adaptive queries does not enable a tester to succeed. Finally, \Cref{sec:discussion} discusses our lower bound construction and proposes some directions for future work.

\section{Technical overview of \Cref{thm:Boolean-lower}}
\label{sec:lower-bound-techniques} 

To motivate our construction, we make the easy observation (which is explicitly stated in \cite{CDHLNSY25}) that if the function $f$ that is being tested has $|f^{-1}(1)|=p2^n$, then relative-error $\eps$-testing of $f$ can be performed simply by doing standard-model $(p \eps)$-testing of $f$.
 Since LTFs over $\{0,1\}^n$ are $\eps$-testable with $\poly(1/\eps)$ queries in the standard uniform-distribution testing model \cite{MORS10}, it follows that any $\omega_n(1)$ lower bound for testing LTFs must use functions for which $p:=|f^{-1}(1)|/2^n$ is at most some $o_n(1)$ quantity. It is natural to think of such LTFs as being similar to Hamming balls with a small radius, and indeed this intuition is the starting point of our construction.

\subsection{High-level overview of the construction}

We consider a distribution $\Dyes$ of yes- functions which are halfspaces with the following structure.  For a uniform random ``center'' $\bz \in \{0,1\}^n$ and a carefully chosen fixed value of the ``radius''~$r=$ $n/\log^2 n=o(n)$, a yes- function outputs 1 on all inputs whose Hamming distance to $\bz$ is at most $r$ and 0 on all inputs at Hamming distance strictly more than $r$ from $\bz$.  (So in other words,~our yes-function LTFs are simply Hamming balls of a certain fixed radius but with an unknown center~$\bz$.)  One useful property of these functions, which is an easy consequence of our choice of $r$ (satisfying $rs=o(n)$), is that a $1-1/\omega(s)$ fraction of satisfying assignments are at Hamming distance exactly $r$ from the ``center'' $\bz$;
we denote this set of points by $\sphere_r(\bz)$.

The distribution of no- functions we consider is similar to the yes- functions in that there is~a uniform random center $\bz\in \{0,1\}^n$, and again points at Hamming distance strictly less than $r$ from $\bz$ are labeled 1 while points at Hamming distance strictly greater than $r$ from $\bz$ are labeled 0.  The difference is in the points in $\sphere_r(\bz)$.  For no- functions, points in $\sphere_r(\bz)$ are labeled, roughly speaking, by partitioning them using a collection of $O(\log s)$ randomly chosen LTFs
  into $\text{poly}(s)$ many pieces and assigning the same (random) $\{0,1\}$-label to all points in a given piece of the partition.  \Cref{sec:yes} describes the construction in detail and shows that such functions~are, with probability $\Omega(1)$, $\Omega(1)$-far in relative-error from all LTFs (\Cref{claim:no-LTF}).  The argument employs a greedy procedure to find $\Omega(1) \cdot |f^{-1}(1)|$ many disjoint ``violating 4-tuples''; given the existence of this many violating 4-tuples, an easy argument gives the claimed relative-distance bound.  

\subsection{High-level sketch of the lower bound argument} \label{sec:lower-bound-sketch}


Given the above-described yes- and no- functions, we proceed to a sketch of the lower bound argument, which goes in two stages. In the first stage we show (in \Cref{sec:sample-based}) that for any ``sample-based'' algorithm that is only given $s=\Theta({\frac {\log n}{\log \log n}})$ uniform random satisfying assignments $\bu^1,\ldots,\bu^s$ of~$f$ and cannot make any queries, a random yes- function is indistinguishable from a random no- function.  The high-level intuition  is that with high probability, 
\begin{flushleft}\begin{enumerate}
    \item[(i)] All of the received satisfying assignments (in either case) will belong to $\sphere_r(\bz)$; 
    \item[(ii)] In the yes- case, these points $\bu^1,\ldots,\bu^s$ will be indistinguishable from a uniform random sample of points that belong to $\sphere_r(\bz)$; and 
    \item[(iii)] Likewise in the no- case, these points will be indistinguishable from a uniform random sample of points that belong to $\sphere_r(\bz)$.  
\end{enumerate}\end{flushleft}
Item (i) is a simple consequence of the fact that a $1-1/\omega(s)$ fraction of satisfying assignments, in either case, belong to $\sphere_r(\bz)$.
Item (ii) is immediate, since if all of the satisfying assignments that are received belong to $\sphere_r(\bz)$, then their distribution is uniform random over $\sphere_r(\bz)$.
Item (iii), on the other hand, requires a more careful argument.
We consider the slightly different process of (1) drawing $O(s)$ samples from $\sphere_r(\bz)$ first (regardless of how the function labels points in $\sphere_r(\bz)$); and then (2) drawing 
  the random partition to label points in $\sphere_r(\bz)$, in particular those samples
  received in (1); and (3) finally returning the
  first $s$ samples that are satisfying assignments.
This can be shown to be close to the actual distribution of $\bu^1,\ldots,\bu^s$
  as well as the uniform distribution of $s$ samples
  from $\sphere_r(\bz)$; 
  the latter follows from the fact that it is very unlikely for any two of the $O(s)$ samples from $\sphere_r(\bz)$ drawn at the beginning to land in the same piece of the partition of $\sphere_r(\bz)$ (see \Cref{claim:samp-distribution-similar} for details).

The second and main stage of the argument (in \Cref{sec:queries-dont-help}) shows that any deterministic algorithm which can make $q$ non-adaptive queries after receiving its $s$ samples from $f^{-1}(1)$ will with high  probability receive the same $q$-bit string as a response to its queries in both the yes- and no- cases.  Given the results of the previous paragraph, for this it suffices to show the following (see \Cref{lemma:maintechlower}): Any deterministic algorithm which takes as input $s$ strings $\bu^1,\dots,\bu^s$ drawn uniformly at random from $\sphere_r(\bz)$ with a uniform center $\bz$ and outputs a point $\by\in \{0,1\}^n$, can only have an $o_n(1)$ probability of outputting a point $\by$ which is both 
\begin{quote} \centering
(a) at least Hamming distance $t=n^{0.4}$ away from each $\bu^i$, and 
(b) in $\sphere_r(\bz)$.
\end{quote}
It is sufficient to show this because the construction of the random yes- functions and the random no- functions ensures that if the string $\by$ is either Hamming-close to any $\bu^i$ or is not in $\sphere_r(\bz)$, then with high probability the yes- function and no- function will label $\by$ the same way.  To see this, note that if $\by \notin \sphere_r(\bz)$ then it will be labeled the same way in the yes- case and the no- case (because in both cases the label is 1 if $\ham(\by,\bz)<r$ and is 0 if $\ham(\by,\bz)>r$).  On the other hand, if $\by \in \sphere_r(\bz)$ but $\by$ is Hamming-close to some $\bu^i$, then with very high probability in the no- case we will have $f(\by)=f(\bu^i)=1$ (given that most likely every $\bu^i$ is far from
  those random hyperplanes used to build the partition), and in the yes- case we also have $f(\by)=1$.

Thus, it remains only to argue that any deterministic algorithm, given as input  $\bu^1,\dots,\bu^s$ as described above, can only have an $o_n(1)$ probability of outputting a point $\by$ that satisfies both (a) and (b). 
Establishing this is the main technical challenge of the lower bound proof, and the argument is rather intricate, see \Cref{proofofmaintechlower}.  However, 
some intuition for why this is true is as follows: we can think of the task of a deterministic algorithm as being that it  must choose some $\bu^i$ and choose $t' \geq t$ coordinates in $[n]$ to flip in $\bu^i$ to obtain the desired string $\by$.  Since $\bu^i\in \sphere_r(\bz)$, the goal is for exactly $t'/2$ of the bit-flips to ``move away from'' $\bz$ and exactly $t'/2$ of the bit-flips to ``move towards'' $\bz$.  Intuitively, since each $\bu^i$ is a random point at Hamming distance exactly $r=o(n)$ from $\bz$, the best way for the algorithm to select $t'/2$ coordinates whose flips will all ``move away from'' $\bz$ is to select $t'/2$ coordinates from the set $\smash{\mathsf{Consistent} :=\{j \in  [n]: \bu^1_j = \cdots =\bu^s_j\}}$, since those are naturally the coordinates for which it is most likely that they are set the same way in $\bu^i$ as in $\bz$. But a simple calculation, using the precise values of $r$ and $s$, shows that at least $\kappa=\Omega(1/n^{0.1})$-fraction of the coordinates in $\mathsf{Consistent}$ are actually coordinates which are set \emph{the opposite way} in all of $\bu^1,\dots,\bu^s$ from how they are set in $\bz$.  Thus, the intuitively optimal strategy of flipping $t'/2$ coordinates from $\mathsf{Consistent}$ will only have a $\approx (1-\kappa)^{t'/2}$ probability of moving exactly $t'/2$ steps away from $\bz$. Instead, it will move roughly $t'/2 - 2 \cdot \Bin(t'/2,\kappa)$ steps away from $\bz$. Since $t' =\Omega(n^{0.4})\gg1/\kappa$, there is ``a lot of uncertainty'' in the distance from $\bz$ that results from flipping these $t'/2$ coordinates, and there is at most an $o_n(1)$ chance that the final string $\by$ has Hamming distance exactly $r$ from $\bz$.  

We close this overview by noting that it is not clear how to formalize a rigorous argument along these lines of the above informal reasoning, since the algorithm can pick any $\by$ as a function of $\bu^1,\ldots,\bu^s$. 
So the actual proof uses a rather different, and delicate, combinatorial and probabilistic argument.


\section{Preliminaries} \label{sec:preliminaries}

We use boldfaced letters such as $\bx, \boldf,\bA$, etc. to denote random variables (which may be \mbox{real-valued,} vector-valued, function-valued, or set-valued; the intended type of the random variable will be clear from the context).
We write $\bx \sim \calD$ to indicate that the random variable $\bx$ is distributed according to probability distribution $\calD$.

We will write $(e_i)_{i=1}^n$ for the collection of standard basis vectors in $\R^n$. Given two sets $A$ and $B$, we use $A \, \triangle \, B$ to denote their symmetric difference, i.e. $A\,\triangle\,B = (A\setminus B) \cup (B\setminus A)$. 
We write $\mathbb{N}$ for the set $\mathbb{N}=\{0,1,2,\dots\}$.\medskip




\noindent {\bf Relative-error testing.}
We recall the relative-error testing model that was introduced in \cite{CDHLNSY25}.
A \emph{relative-error} testing algorithm for a class of functions ${\cal C}$ over $\zo^n$ has oracle access to a black-box oracle $\MQ(f)$ for $f$, and also has access to a $\SAMP(f)$ oracle which, when called, returns an element $\bx$ drawn uniformly at random from $f^{-1}(1)$.
A relative-error testing algorithm for ${\cal C}$ must output ``yes'' with high probability (say at least 9/10; this success probability can be easily amplified) if $f \in {\cal C}$, and must output ``no'' with high probability (again, say at least 9/10) if $\reldist(f,{\cal C}) \geq \eps$, where
$\reldist(f,{\cal C})=\min_{g \in {\cal C}}\reldist(f,g)$ and 
  the relative distance between $f$ and $g$ is defined as
\[
\reldist(f,g) = {\frac {|f^{-1}(1) \ \triangle \ g^{-1}(1)|}{|f^{-1}(1)|}}.
\]


%
%


%



\def\ham{\textsf{ham}}
\def\diff{\textsf{diff}}
\def\bz{\boldsymbol{z}}
 
\section{An $\tilde{\Omega}(\log n)$ lower bound for relative-error LTF testing: Proof of \Cref{thm:Boolean-lower}}
\label{sec:boolean-lower}

In this section, we consider non-adaptive algorithms which first receive $s=\frac{0.05\log n}{\log \log n}$ samples from $f^{-1}(1)$ and then make $q=n^{0.01}$ many non-adaptive queries, and show that such algorithms cannot test LTFs under the relative-error model. Note that while these $q$ queries are non-adaptive in the sense that they all are chosen ``in parallel,'' they may depend on the outcome of the $s$ samples.

\def\bn{\{0,1\}^n}


\noindent 

\subsection{The yes- and no- distributions} \label{sec:yes}

We 
write $\ham(x,y)$ 
   to denote the Hamming distance between 
 $x,y \in \bn$.
Given $z \in \bn$, we write $\sphere_r(z)$ to denote 
the Hamming sphere of radius $r$ around $z$:
$$\sphere_r(z):=\big\{x \in \bn:\ham(x,z)=r\big\}.$$ 
Let $0^n$ denote the all-$0$ string in $\{0,1\}^n$.
Then $\sphere_r(0^n)$ contains strings in $\{0,1\}^n$ of Hamming weight $r$,
  and we can equivalently define $\sphere_r(z)$ as 
  the set of $x\oplus z$ for $x\in \sphere_r(0^n)$.

We introduce the following parameters:
\begin{equation}\label{eq:parameters1}
\delta:=\frac{1}{\log^2 n},\qquad r:=\delta n \qquad \text{and}\qquad
s:=\frac{0.05\log n}{\log \log n}
\end{equation} so that $\delta$ and $s$ satisfy
  the following two conditions:
$$
\delta s=o_n(1)\qquad\text{and}\qquad \delta^s=\frac{1}{n^{0.1}}.
$$
For simplicity we assume below that $r$ is an even
integer.\medskip

\noindent 
{\bf The yes- distribution.} 
A random yes- function is obtained by first choosing  a random ``center'' $\bz \in \bn$, and then taking the function to be the indicator of a Hamming ball of radius $r$ centered at $\bz$.
More formally, a draw of a function $\boldf \sim \Dyes$ is performed as follows:

\begin{enumerate}

\item Draw a uniform $\bz \sim \bn$.

\item For $x \in \bn$, 
\[
\boldf(x) = 
 f_{\bz}(x)=
\begin{cases}
1 & \text{if~}\ham(x,\bz) \leq r\\
0 & \text{if~}\ham(x,\bz) > r
\end{cases}.
\]
\end{enumerate}
It is easy to see that every function
  $f$ in the support of $\Dyes$ is an LTF.\medskip


\noindent\textbf{The no- distribution.} In words, a random no- function is obtained by first choosing  a random ``center'' $\bz \sim \bn$ of a Hamming ball of radius $r$. The function outputs 1 on points that lie in the interior of the Hamming ball and 0 on points that lie outside the Hamming ball. (Note that thus far, this is the same as a random yes- function.)  For points that lie exactly on the surface of the Hamming ball (i.e.~points in $\sphere_r(\bz)$), the output is determined by a randomly chosen \emph{partition} of the radius-$r$ Hamming sphere into disjoint pieces (defined by $O(\log s)$ many random hyperplanes); each piece is assigned a random bit from $\zo$, and all points in the piece have that bit as the output value.

More precisely,
a draw of a function $\bg \sim \Dno$ is performed as follows:
%
\begin{flushleft}
\begin{enumerate}
\item Draw a uniform $\bz \sim \bn$.  For conciseness let $\bX$ denote $\sphere_r(\bz)$.

\item Let $t=10\log s$.
Let $\bzeta^1,\dots,\bzeta^t$ be $t$ independent vectors each drawn uniformly from $\{\pm 1\}^n$.
We define a partition of $\bX$ into $2^t=s^{10}$ disjoint pieces $\bar{\bX}=(\bX_{\bar{b}})_{\bar{b} \in \{0,1\}^{t}}$ as follows:
\[
\bX_{\bar{b}}=
\bX_{(b_1,\dots,b_t)} :=
\left\{x \in \bX: \Indicator\big[\bzeta^i \cdot (x\oplus \bz) \geq  0\big] = b_i \text{~for all~}i\in [t]\right\}.
\] 
Note that $x\oplus \bz\in \sphere_r(0^n)$ so 
  $\bzeta^i\cdot (x\oplus \bz)\ge 0$ iff among the $r$ many 
  indices $j\in [n]$ with $(x\oplus \bz)_j=1$, the number of $+1$'s in $\bzeta^i$ is 
  at least as large as the number of $-1$'s.

\item For each string $\bar{b} \in \zo^{t}$, independently draw a random bit $\ba_{\bar{b}} \in \zo.$ Let $\bar{\ba} \in \zo^{\zo^{t}}$ denote the $2^t$-bit string consisting  of all $2^t$ of the $\ba_{\bar{b}}$'s.

\item Finally we define $\bg$ as follows: For $x \in \zo^n$, 
\[
\bg(x) = 
g_{\bz,\bar{\bX},\bar{\ba}}(x)=
\begin{cases}
1 & \text{if~}\ham(x,\bz) < r\\
0 & \text{if~}\ham(x,\bz) > r\\
\ba_{\bar{b}} & \text{if~} x \in \sphere_r(\bz) \text{~and~} x \in \bX_{\bar{b}}
\end{cases}.
\]
\end{enumerate}
\end{flushleft}
The following claim says that $\bg\sim\Dno$ is relative-error far from LTFs with high probability.











\begin{lemma} \label{claim:no-LTF}
With probability at least $0.12$, 
a draw of $\bg \sim \Dno$ has $\reldist(\bg,\text{{LTF}})=\Omega(1).$
\end{lemma}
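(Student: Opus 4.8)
\textbf{Proof plan for \Cref{claim:no-LTF}.}
The plan is to show that a draw of $\bg\sim\Dno$ is, with constant probability, far from every LTF by exhibiting many disjoint ``violating $4$-tuples'' inside $\sphere_r(\bz)$. Recall that for an LTF $h(x)=\mathsf{sign}(w\cdot x - \theta)$ and four points $a,b,c,d$ with $a\oplus b = c\oplus d$ (so that $a+d = b+c$ as real vectors), it is impossible to have $h(a)=h(d)=1$ and $h(b)=h(c)=0$: indeed $w\cdot a + w\cdot d = w\cdot b + w\cdot c$, so $a,d$ both above threshold while $b,c$ both below threshold is a contradiction. Such a quadruple $(a,b,c,d)$ with $\bg(a)=\bg(d)=1$, $\bg(b)=\bg(c)=0$ is a \emph{violating $4$-tuple} for $\bg$: any LTF $h$ must disagree with $\bg$ on at least one of the four points. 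Hence if we can find $m$ pairwise-disjoint violating $4$-tuples for $\bg$, then $\reldist(\bg,\mathrm{LTF})\ge m/|\bg^{-1}(1)|$, and since $|\bg^{-1}(1)|\le 2\cdot|\{x:\ham(x,\bz)\le r\}|$ and most of that mass lies in $\sphere_r(\bz)$, it suffices to produce $m = \Omega(|\sphere_r(\bz)|)$ disjoint violating $4$-tuples.

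The core of the argument is a greedy construction of these $4$-tuples using the random labels $\bar{\ba}$ on the pieces $\bX_{\bar b}$. First I would set up the combinatorial skeleton: fix the center $\bz$ (WLOG $\bz = 0^n$ by the symmetry noted in the paper, so $\sphere_r(\bz)=\sphere_r(0^n)$ is the weight-$r$ slice) and fix a disjoint family of $4$-tuples of the form $(a,b,c,d)$ all lying on $\sphere_r(0^n)$ with $a\oplus b = c\oplus d$; concretely one can take $a,d$ to be obtained from $b,c$ by swapping a fixed pair of coordinates (one $0$-coordinate and one $1$-coordinate), which keeps all four on the weight-$r$ slice. A standard double-counting / greedy packing shows the slice contains $\Omega(|\sphere_r(0^n)|)$ pairwise-disjoint such ``swap $4$-tuples.'' Now the randomness enters: for each such $4$-tuple, the event that it is \emph{violating} for $\bg$ is the event that the labels land as $\bg(a)=\bg(d)=1,\ \bg(b)=\bg(c)=0$. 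If the four points $a,b,c,d$ fell into four distinct pieces of the partition $\bar{\bX}$, this event would have probability exactly $1/16$ over the i.i.d.\ bits $\ba_{\bar b}$, independently across $4$-tuples with disjoint pieces. The plan is therefore: (i) argue that with high probability over the random hyperplanes $\bzeta^1,\dots,\bzeta^t$, a constant fraction of the $4$-tuples in our fixed disjoint family have all four points in distinct pieces (the points of a swap $4$-tuple differ in only two coordinates, and one checks that a uniform $\pm1$ hyperplane separates them with constant probability, so with $t=10\log s$ independent hyperplanes the four points are pairwise separated with probability at least some constant; then a Markov/averaging argument keeps a constant fraction of $4$-tuples ``good''); (ii) condition on such a configuration of hyperplanes and use a second-moment / Chernoff-type concentration over the independent label bits $\ba_{\bar b}$ to conclude that $\Omega(1)$ fraction of the good $4$-tuples are actually violating, hence $m=\Omega(|\sphere_r(0^n)|)$; (iii) combine with the sparsity bound $|\bg^{-1}(1)| = \Theta(|\sphere_r(0^n)|)$ (which follows from $\delta s = o_n(1)$, ensuring the interior ball is a $1/\omega(1)$ fraction of the sphere) to get $\reldist(\bg,\mathrm{LTF}) = \Omega(1)$ with probability $\Omega(1)$, and finally quantify the constants to beat the stated $0.12$.

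The main obstacle I anticipate is step (ii): the violating events for different $4$-tuples are \emph{not} independent unless the $4$-tuples use disjoint sets of pieces, and a priori many $4$-tuples could crowd into the same few pieces. Two of the coordinates of a swap $4$-tuple are shared structure, so a piece $\bX_{\bar b}$ could contain points from many $4$-tuples. The fix is to be careful about \emph{which} disjointness we need: we do not need the $4$-tuples to use disjoint pieces, only disjoint \emph{points}, and we can extract independence by a different route — e.g.\ expose the labels $\ba_{\bar b}$ and directly lower-bound the \emph{number} of violating $4$-tuples among the good family via its expectation ($\ge \frac{1}{16}\cdot\Omega(|\sphere_r(0^n)|)$) together with a bounded-differences (Azuma/McDiarmid) inequality in the bits $\ba_{\bar b}$, since changing one bit $\ba_{\bar b}$ changes the count by at most the number of (point-disjoint) good $4$-tuples touching piece $\bX_{\bar b}$, which in turn is at most $|\bX_{\bar b}|$; summing the squares of these bounds and using that the pieces partition a set of size $|\sphere_r(0^n)|$ controls the martingale variance provided no single piece is too large — and this ``no huge piece'' property is itself a consequence of the hyperplanes being balanced, which should follow from an anticoncentration estimate on $\bzeta^i\cdot(x\oplus\bz)$ over the weight-$r$ slice. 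Making this last anticoncentration quantitative enough (so that $\max_{\bar b}|\bX_{\bar b}|$ is, say, $o(|\sphere_r(0^n)|/\mathrm{polylog})$ with high probability) is the delicate point, and I expect it to occupy the bulk of the proof; everything else is routine packing and concentration.
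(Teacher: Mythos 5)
Your high-level skeleton (midpoint argument for violating $4$-tuples, a packing of $\Omega\big(\binom{n}{r}\big)$ point-disjoint tuples on the slice, and the sparsity bound $|\bg^{-1}(1)|=O\big(\binom{n}{r}\big)$) is exactly the paper's, but your concrete tuple construction breaks the probabilistic step (i), and this is a genuine gap. Your ``swap $4$-tuples'' consist of points that differ in only \emph{two} coordinates, and for such a pair $a,b$ on the slice we have $\zeta\cdot(b\oplus z)=\zeta\cdot(a\oplus z)+\zeta_j-\zeta_i$, so the two inner products differ by at most $2$. Hence a uniform $\zeta\in\{\pm 1\}^n$ can separate $a$ from $b$ only when $\zeta\cdot(a\oplus z)$ falls within distance $2$ of the threshold $0$, which by anticoncentration of a sum of $r$ independent signs has probability $O(1/\sqrt{r})$ --- \emph{not} the constant probability you assert. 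Since the partition uses only $t=10\log s=O(\log\log n)$ hyperplanes while $r=n/\log^2 n$, a union bound shows the swapped pair lies in the same piece with probability $1-O\big(t/\sqrt{r}\big)=1-o_n(1)$, so it receives the same label and your $4$-tuple is violating with probability $o_n(1)$ rather than $\Omega(1)$. The expected number of violations in your family is then $o\big(\binom{n}{r}\big)$, far short of the $\Omega(|\bg^{-1}(1)|)$ disjoint violations needed for constant relative distance. The paper sidesteps this by building each good $4$-tuple from four disjoint index sets $J_1,\dots,J_4$ of size $r/2$, so every pair of points in a tuple differs in $\Theta(r)$ coordinates; then each random hyperplane separates each pair with constant probability (via signed sums over the $J_i$'s), and all four points land in distinct pieces with probability $1-o_n(1)$, giving a violation probability of at least $(1-o_n(1))/8\ge 0.124$ per tuple.

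A secondary point: the machinery you plan for step (ii) --- McDiarmid in the label bits together with an anticoncentration-based bound on $\max_{\bar{b}}|\bX_{\bar{b}}|$ --- is unnecessary. The lemma only claims success probability $0.12$, so the paper simply lower-bounds the expected number $\bY$ of violating tuples among the fixed disjoint family $W$ by $0.124\,|W|$ and uses the trivial bound $\bY\le |W|$ (reverse Markov) to get $\Pr[\bY\ge 0.005\,|W|]\ge 0.12$; point-disjointness of the tuples is used only so that each violating tuple forces a distinct correction point, not for any independence or concentration across tuples. So even after fixing the tuple construction, you would not need to control the largest piece of the partition at all.
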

\begin{proof}
It suffices to consider the case when $\bz=0^n$ and we write
  $g_{\bar{\bX},\bar{\ba}}$ to denote $g_{0^n,\bar{\bX},\bar{\ba}}$
  below.

We say $(x^1, x^2, x^3, x^4) \in (\{0, 1\}^n)^4$ is a \emph{good} $4$-tuple (of strings in $\sphere_r(0^n)$) if
  there exist four pairwise disjoint subsets $J_1,J_2,J_3,J_4\subset [n]$ of size $r/2$ each such that 
\begin{itemize}
\item $x^1_j=x^2_j=1$ and $x^3_j=x^4_j=0$ for all $j\in J_1$;
\item $x^1_j=x^2_j=0$ and $x^3_j=x^4_j=1$ for all $j\in J_2$;
\item $x^1_j=x^3_j=1$ and $x^2_j=x^4_j=0$ for all $j\in J_3$;
\item $x^1_j=x^3_j=0$ and $x^2_j=x^4_j=1$ for all $j\in J_4$; and
\item $x^i_j=0$ for all $i\in \{1,2,3,4\}$ and $j\notin J_1\cup J_2\cup J_3 \cup J_4.$
\end{itemize}
We say that two good $4$-tuples are disjoint if they are disjoint as two
  sets of size $4$ each.

Moreover, given any function $h: \{0, 1\}^n \to \{0, 1\}$, we call $w \coloneqq (x^1, x^2, x^3, x^4) \in (\{0, 1\}^n)^4$ a \emph{violating} $4$-tuple for $h$ if $w$ is good and $h(x^1) = h(x^4) \ne h(x^2) = h(x^3)$, because this shows that $h$ cannot be an LTF. 
To see this, suppose $h(x) = \Indicator[\xi \cdot x \geq \theta]$ is an LTF and $(x^1, x^2, x^3, x^4)$ is a violating $4$-tuple for $h$. Then (noting that $h(x)$ is well defined over $\mathbb{R}^n$) 
\[
h\pbra{{\frac {x^1 + x^4} 2}}
=
h(x^4) \neq
h(x^2)
=
h\pbra{{\frac {x^2 + x^3} 2}},
\]
but this contradicts the fact that the two vectors $({x^1 + x^4})/ 2$ and $({x^2 + x^3})/ 2$ are identical.
%
%

\Cref{claim:no-LTF} follows from the following two claims:

\begin{claim}\label{claim:prob-no-function-makes-4-tuple-violating}
Let $w=(x^1, x^2, x^3, x^4)$ be a good $4$-tuple. Then we have
    \[\Prx_{\bar\bX, \bar\ba}\big[w\text{ is violating for }g_{\bar\bX, \bar\ba}\big] \ge 0.124.\]
\end{claim}
 \begin{claim}\label{claim:Omega-1-disjoint-good-4-tuples}
There is a collection of $\Omega\big({n\choose r}\big)$ many pairwise disjoint good $4$-tuples in $\sphere_r(0^n)$.
    \end{claim}

We delay the proof of these two claims and first use them to prove \Cref{claim:no-LTF}.
   Let $W$ denote a collection of $\Omega({n\choose r})$ many pairwise disjoint good $4$-tuples from \Cref{claim:Omega-1-disjoint-good-4-tuples}. 
Let $\bY$ be the random variable that denotes the number of 
  $4$-tuples in $W$ that are violating in $g_{\bar\bX,\bar\ba}$.
Then by \Cref{claim:prob-no-function-makes-4-tuple-violating}, 
    \[\begin{split}
        \Ex_{\bar\bX, \bar\ba}\big[\bY\big]\ge 0.124\cdot |W|.
    \end{split}\]
    Therefore, by Markov inequality, we have
    \[\Prx_{\bar\bX, \bar\ba}\big[\bY \ge 0.005|W|\big] \ge 0.12.\]
When the event above happens, 
  at least one string in each violating $4$-tuple needs to be corrected
  to make the function $g_{\bar\bX,\bar\ba}$ an LTF and thus, at least 
  $\Omega({n\choose r})$ strings need to be correct, which implies that 
  the $g_{\bar\bX,\bar\ba}$ has relative distance $\Omega(1)$ from any LTF
  using the fact that
$$
g^{-1}_{\bar\bX,\bar\ba}(1)\le \sum_{k=0}^r {n\choose r} =O\left({n\choose r}\right).
$$
    This concludes the proof of \Cref{claim:no-LTF}.
\end{proof}

Now we prove \Cref{claim:prob-no-function-makes-4-tuple-violating} and 
  \Cref{claim:Omega-1-disjoint-good-4-tuples}:
    
\begin{proofof}{\Cref{claim:prob-no-function-makes-4-tuple-violating}}
Let $(x^1,x^2,x^3,x^4)$ be a good $4$-tuple with respect to $J_1,J_2,J_3$ and $J_4$.
We show that with probability at least $1-o_n(1)$ (over the draw of $\bar{\bX}$), 
  $x^1,x^2,x^3$ and $x^4$ lie in distinct parts of the partition.
The claim follows because when they do lie in
  distinct parts, the probability of 
$$
\left(g_{\bar\bX,\bar\ba}(x^1),
g_{\bar\bX,\bar\ba}(x^2),
g_{\bar\bX,\bar\ba}(x^3),
g_{\bar\bX,\bar\ba}(x^4)\right)
=(1,0,0,1)\ \text{or}\ (0,1,1,0)
$$
is $1/8$.
So the probability that the $4$-tuple is violating is at least $(1-o_n(1))\cdot (1/8)\ge 0.124$.

To see that they lie in distinct parts of the partition $\bar\bX$ with high probability, we take any two points $x,y$ from $\{x^1,x^2,x^3,x^4\}$ and show that they lie in distinct parts with probability at least $1-o_n(1)$. The claim then follows from a union bound.

There are two situations we need to consider.
We start with the easier case when $x=x^1$ and $y=x^4$
  (with disjoint supports).
Let $\bzeta\sim \{\pm 1\}^n$. Because the supports of $x$ and $y$ are disjoint,
  the probability that $x$ and $y$ lying on different sides of $\bzeta$ is at least $1/3$.
Therefore, with $t=10\log s$ many $\bzeta$'s drawn in $\bar\bX$, the probability that 
  $x,y$ lying in the same part is at most $(2/3)^{t}=o_n(1)$.

Finally we consider the case when $x=x^1$ and $y=x^2$, where $x$ is supported on $J_1\cup J_3$ and $y$ is supported on $J_1\cup J_4$.
Let $\bzeta\sim \{\pm 1\}^n$ and  
 let $\bs_1, \bs_3, \bs_4$ denote the sums of the coordinates of $\bzeta$ over indices in $J_1, J_3$ and $J_4$, respectively. By
independence across the coordinates of $\bzeta$, the following compound event happens with probability
at least $1/5$:
$$
|\bs_1| < 0.01\sqrt{r/2}, \quad 
\bs_3 > 0.01\sqrt{r/2},\quad\text{and}\quad 
\bs_4 < -0.01\sqrt{r/2}.
$$
When this happens $x$ and $y$ must lie on different sides of $\bzeta$. The rest of the argument is similar to the previous case. 
This finishes the proof of the claim.
\end{proofof}

    \begin{proofof}{\Cref{claim:Omega-1-disjoint-good-4-tuples}}
        Consider the following bipartite graph $G = ((U, W), E)$: $U = \sphere_r(0^n)$, $W$ is the set of all good $4$-tuples in $\sphere_r(0^n)$, and $(u,w) \in E$ for $u \in U, w \in W$ if and only if $u$ is one of the strings in the $4$-tuple $w$.

        By symmetry, every vertex in $U$ shares the same degree, which we denote by $d$. On the other hand, every vertex in $V_2$ has degree $4$. So $G$ is a bi-regular bipartite graph.
        Note that if two vertices in $W$ have disjoint neighborhoods, then they are two disjoint good $4$-tuples. Therefore, it suffices to show that there are $\Omega({n\choose r})=\Omega(|U|)$ vertices in $W$  that have pairwise disjoint neighborhoods.

        Let $H = \emptyset$. We repeat the following greedy process to find pairwise disjoint good $4$-tuples: 
        \begin{flushleft}\begin{enumerate}
            \item Pick an arbitrary $w = (x^1, x^2, x^3, x^4)$ in $W$ and add $w$ into $H$.
            \item Remove $w$ from $W$, $x^1, x^2, x^3, x^4$ from $U$, and all edges incident to one of them from $E$.
        \end{enumerate}\end{flushleft}
        Note that when we add a good $4$-tuple $w$ in $W$ to $H$, we remove every good $4$-tuple not disjoint with $w$ from $W$. Thus, the good $4$-tuples in $H$ are pairwise disjoint. Moreover, since every vertex in $U$ has degree $d$ and every vertex in $W$ has degree $4$ at the beginning, each round of the process we remove at most $4d + 1$ vertices from $W$. Therefore, we have
        \[|H| \ge \frac{|W|}{4d + 1} = \frac{d|U|/4}{4d + 1} = \Omega(|U|). \]
This finishes the proof of \Cref{claim:Omega-1-disjoint-good-4-tuples}. \end{proofof}

\subsection{Sample-based testers cannot succeed}\label{sec:sample-based}
The first step to prove the lower bound is to show that an algorithm that only receives $s$ samples and does not make any queries cannot distinguish random yes- functions from random no- functions.  More precisely, in this section we will prove the following lemma:

\begin{lemma} \label{claim:samp-distribution-similar}
Fix any $z\in \{0,1\}^n$. Consider the following three distributions over $s$-tuples of points from $\bn$:
\begin{flushleft}
\begin{enumerate}

\item [(a)] 
Let $\bar{\bu}=(\bu^1,\dots,\bu^s)$ be $s$ independent uniform draws from $\sphere_r(z)$.

\item [(b)] 
Let $\bar{\bv}=(\bv^1,\dots,\bv^s)$ be $s$ independent uniform draws from $f_z^{-1}(1)$ (or equivalently, draws from the set of points with Hamming distance at most $r$ from $z$).

\item [(c)] Draw 
$\bar{\bX}$ and $\bar{\ba}$ uniformly at random. Let
  $\bg=g_{z,\bar{\bX},\bar{\ba}}$ be the no- function defined using
  $z,\bar{\bX}$ and $\bar{\ba}$, and let $\bar{\bw}=(\bw^1,\dots,\bw^s)$ be $s$ independent uniform draws from $\bg^{-1}(1)$.

\end{enumerate}
\end{flushleft}
Then we have
\begin{equation} \label{eq:uv-close}
\dtv(\bar{\bu},\bar{\bv})=\dtv\big((\bu^1,\dots,\bu^s),(\bv^1,\dots,\bv^s)\big) \leq o_n(1)
\end{equation} 
and
\begin{equation} \label{eq:vw-close}
\dtv(\bar{\bv},\bar{\bw})=\dtv\big((\bv^1,\dots,\bv^s),(\bw^1,\dots,\bw^s)\big) \leq o_n(1).
\end{equation}
\end{lemma}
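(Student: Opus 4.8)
The plan is to establish both bounds by coupling arguments, routing $\dtv(\bar{\bv},\bar{\bw})$ through $\bar{\bu}$ and through an intermediate ``rejection‑sampling'' description of $\bar{\bw}$. Throughout I would assume $z=0^n$ without loss of generality: for a general center the map $x\mapsto x\oplus z$ is a measure‑preserving bijection carrying the three distributions to the corresponding ones for center $0^n$ (it sends $\sphere_r(z)$ to $\sphere:=\sphere_r(0^n)$, $f_z^{-1}(1)$ to the ball $B:=\{x:\ham(x,0^n)\le r\}$, and $g_{z,\bar{\bX},\bar{\ba}}$ to $g_{0^n,\bar{\bX}',\bar{\ba}}$ with $\bar{\bX}'$ built from the same hyperplanes), so every total‑variation distance is unchanged. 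I would use repeatedly that $|\sphere|=\binom nr$, that $\binom{n}{k-1}/\binom nk=k/(n-k+1)\le \delta/(1-\delta)=O(\delta)$ for all $k\le r$ by \eqref{eq:parameters1}, and hence $|B|\le(1-O(\delta))^{-1}\binom nr$ and $|B\setminus\sphere|=\sum_{k<r}\binom nk\le O(\delta)\binom nr$.

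For \eqref{eq:uv-close} the argument is short: a uniform draw $\bv^i$ from $B$ lies in $\sphere$ with probability $|\sphere|/|B|\ge 1-O(\delta)$, and conditioned on this it is uniform on $\sphere$, so I can couple $\bv^i$ with a uniform draw $\bu^i$ from $\sphere$ by setting $\bu^i=\bv^i$ whenever $\bv^i\in\sphere$ (and drawing $\bu^i$ fresh otherwise). Doing this independently over $i\in[s]$ makes the two tuples agree except with probability $O(s\delta)=o_n(1)$, using $\delta s=o_n(1)$. For \eqref{eq:vw-close}, by the triangle inequality and \eqref{eq:uv-close} it is enough to bound $\dtv(\bar{\bu},\bar{\bw})$. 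Conditioning on $\bar{\bX},\bar{\ba}$ and setting $L:=\bigcup_{\bar b:\ba_{\bar b}=1}\bX_{\bar b}\subseteq\sphere$, we have $\bg^{-1}(1)=(B\setminus\sphere)\sqcup L$, and I will argue that with probability $1-o_n(1)$ over $\bar{\bX},\bar{\ba}$ the partition is \emph{balanced} ($\sum_{\bar b}|\bX_{\bar b}|^2\le s^{-3}\binom nr^2$) and $|L|\ge\tfrac14\binom nr$; on that event $|B\setminus\sphere|/|\bg^{-1}(1)|\le|B\setminus\sphere|/|L|=O(\delta)$, so as in the previous coupling $\bar{\bw}$ is within $o_n(1)$ of a tuple $\bar{\bw}'$ of $s$ i.i.d.\ uniform draws from $L$. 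The key observation is that $\bar{\bw}'$ has the same law as the output of the \textbf{$3$-step process}: \emph{(i)}~draw $\by^1,\by^2,\dots$ i.i.d.\ uniform from $\sphere$; \emph{(ii)}~draw $\bar{\bX},\bar{\ba}$; \emph{(iii)}~return $(\by^{j_1},\dots,\by^{j_s})$, where $j_1<\dots<j_s$ are the first $s$ indices $j$ with $\bg(\by^j)=1$ --- since drawing the partition first and keeping the i.i.d.-from-$\sphere$ draws that land in $L$ gives i.i.d.\ uniform draws from $L$ (the event $L=\emptyset$, of probability $2^{-2^t}$, is absorbed into the error). So the remaining task is to show $\dtv(\text{$3$-step output},\bar{\bu})=o_n(1)$.

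The main engine for this is a \textbf{no-collision claim}: if $\by,\by'$ are independent uniform points of $\sphere$, then $\Prx[\by,\by'\text{ lie in the same part of }\bar{\bX}]\le s^{-4}$. Granting it, I would proceed as follows. Since $\Ex_{\bar{\bX}}\big[\sum_{\bar b}(|\bX_{\bar b}|/\binom nr)^2\big]=\Prx[\by,\by'\text{ same part}]\le s^{-4}$, Markov gives balancedness with probability $\ge1-s^{-1}$, and on that event $\mathrm{Var}_{\bar{\ba}}[|L|]=\tfrac14\sum_{\bar b}|\bX_{\bar b}|^2\le\tfrac14 s^{-3}\binom nr^2$, so Chebyshev around $\Ex_{\bar{\ba}}[|L|]=\tfrac12\binom nr$ yields $|L|\ge\tfrac14\binom nr$ except with probability $O(s^{-3})=o_n(1)$ --- this supplies the two facts used above. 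Then, setting $M:=3s$ and letting $\bG$ be the event that $\by^1,\dots,\by^M$ lie in pairwise distinct parts, a union bound over the $\binom M2$ pairs (using balancedness, under which two uniform-from-$\sphere$ points collide with probability $\le s^{-3}$) gives $\Prx[\overline{\bG}]\le\binom M2 s^{-3}+o_n(1)=o_n(1)$. Conditioned on $\bG$ and on $(\by^j)_{j\le M}$, the bits $\bg(\by^1),\dots,\bg(\by^M)$ are $M$ \emph{distinct} coordinates of the uniform string $\bar{\ba}$, hence i.i.d.\ fair coins whose joint law does not depend on $(\by^j)_{j\le M}$; thus, conditioned on $\bG$ alone, $(\by^j)_{j\le M}$ is independent of these keep-bits, which are i.i.d.\ fair. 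If the keep-bits were replaced by a fresh i.i.d.\ fair sequence, the same selection rule would output the $\by^j$'s at $s$ positions determined solely by the bits --- which is exactly $s$ i.i.d.\ uniform draws from $\sphere$, i.e.\ $\bar{\bu}$ (and the first $s$ ones land within the first $M=3s$ bits except with probability $e^{-\Omega(s)}=o_n(1)$). So the $3$-step output conditioned on $\bG$ is a fixed measurable function --- driven by an independent fair-coin sequence --- of $(\by^j)_{j\le M}$, while the same function applied to a genuinely i.i.d.-uniform sequence produces $\bar{\bu}$; since conditioning on $\bG$ moves the law of $(\by^j)_{j\le M}$ by at most $2\Prx[\overline{\bG}]=o_n(1)$ in total variation, the data-processing inequality gives $\dtv(\text{$3$-step output}\mid\bG,\ \bar{\bu})=o_n(1)$, and removing the conditioning costs a further $o_n(1)$.

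The hard part --- and the one real computation --- is the no-collision claim. Writing $a:=\by$, $b:=\by'\in\sphere$ and $k:=|\mathrm{supp}(a)\cap\mathrm{supp}(b)|$, since $a,b$ are independent uniform weight-$r$ strings $k$ is hypergeometric with mean $r^2/n=\delta^2 n$, so $k\le 2\delta r$ except with probability $e^{-\Omega(\delta^2 n)}$. For fixed such $a,b$ and uniform $\bzeta\in\{\pm1\}^n$, split $\bzeta\cdot a=\boldsymbol{C}+\boldsymbol{S}$ and $\bzeta\cdot b=\boldsymbol{C}+\boldsymbol{T}$, where $\boldsymbol{C}$ sums $\bzeta$ over the $k$ shared coordinates and $\boldsymbol{S},\boldsymbol{T}$ over the $r-k=\Theta(r)$ private coordinates of $a$ and $b$; then $\Prx_\bzeta\big[\Indicator[\bzeta\cdot a\ge 0]=\Indicator[\bzeta\cdot b\ge 0]\,\big|\,\boldsymbol{C}=c\big]=p_c^2+(1-p_c)^2$ with $p_c:=\Prx[\boldsymbol{S}\ge -c]$, and symmetry of $\boldsymbol{S}$ plus anticoncentration of a sum of $\Theta(r)$ signs give $|p_c-\tfrac12|\le\Prx[|\boldsymbol{S}|\le|c|]+\Prx[\boldsymbol{S}=0]=O(|c|/\sqrt r)$. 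Hoeffding bounds $|\boldsymbol{C}|\le\sqrt{2k\ln r}$ except with probability $\le 2/r$, and then $|c|/\sqrt r=O(\sqrt{\delta\ln r})=O(1/\sqrt{\log n})=o_n(1)$, so $p_c^2+(1-p_c)^2=\tfrac12+o_n(1)$; averaging over $\boldsymbol{C}$ (bounding the rare large-$|\boldsymbol{C}|$ event by $1$) keeps the per-hyperplane agreement probability $\le 0.6$ for $n$ large. Since the $t=10\log s$ vectors $\bzeta^1,\dots,\bzeta^t$ are independent, $\Prx[\by,\by'\text{ same part}\mid k\le 2\delta r]\le 0.6^{\,t}\le s^{-4}$ (using $2^t=s^{10}$, i.e.\ $t=10\log_2 s$), and adding the $e^{-\Omega(\delta^2 n)}$ from $k>2\delta r$ gives the claim. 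I expect this anticoncentration estimate, together with the careful bookkeeping of the conditioning in the $3$-step coupling, to be the two places that demand the most attention.
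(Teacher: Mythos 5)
Your proposal is correct, and its engine is the same as the paper's: reduce to $z=0^n$; get \eqref{eq:uv-close} from the fact that a uniform point of the ball lies on $\sphere_r(0^n)$ except with probability $O(\delta)$ plus a union bound over the $s$ draws; and for \eqref{eq:vw-close}, describe $\bar{\bw}$ by rejection sampling, show that with probability $1-o_n(1)$ the sampled sphere points land in pairwise distinct parts of $\bar{\bX}$ (via anticoncentration of random-sign sums over nearly disjoint supports, since two random weight-$r$ points share only $O(\delta r)$ coordinates), and then observe that on that event the labels $\bg(\cdot)$ of the sampled points are i.i.d.\ fair coins independent of the points. Where you differ is the bookkeeping around that engine. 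The paper couples $\bar{\bv}$ and $\bar{\bw}$ \emph{directly}: it draws one shared pool of $3s$ uniform points from the ball and forms $\bar{\bv}^*$ by filtering with fresh fair coins and $\bar{\bw}^*$ by filtering with $\bg$; on the good event the two filtered tuples are identically distributed, so no control of $|L|=|\bg^{-1}(1)\cap\sphere_r(0^n)|$ is ever needed, and the per-pair collision bound only has to be $o(s^{-2})$ (the paper's cruder $(4/5)^{10\log s}$ suffices). You instead route through $\bar{\bu}$ and through ``i.i.d.\ uniform on $L$,'' which forces two extra ingredients: a quantitatively stronger per-pair collision bound ($\le s^{-4}$, which your Hoeffding-plus-anticoncentration computation does deliver) feeding a second-moment/Chebyshev argument that $|L|\ge\tfrac14\binom nr$ with probability $1-o_n(1)$, and a slightly more delicate conditioning step (keep-bits i.i.d.\ fair and independent of the points given the no-collision event, plus the $2\Pr[\overline{\bG}]$ total-variation cost of conditioning). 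Both routes are valid; the paper's shared-pool coupling is leaner, while yours is self-contained and makes explicit the (true, and intuitively useful) facts that the partition is balanced and that roughly half the sphere is labeled $1$.
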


\Cref{eq:vw-close} implies  that no $s$-sample sample-based tester (which makes no queries) can succeed.  \Cref{eq:uv-close} and \Cref{eq:vw-close} together imply that all three distributions are close, which we will use in \Cref{sec:queries-dont-help}  to argue about testing algorithms that can make non-adaptive queries.

\begin{proof}[Proof of \Cref{claim:samp-distribution-similar}]
By symmetry, it suffices to prove the lemma for the case when $z=0^n$.
\Cref{eq:uv-close} follows from the fact that, if $\bv$ is a uniform draw
  from points of Hamming weight at most $r$, then we have
  (recalling the fact that $\smash{{ {{n \choose k}}\big/{{n \choose k-1}}}= {\frac {n-k+1} k}}$)
\[
\Prx_{\bv}\big[\bv\in \sphere_r(0^n)\big]=
\frac{{n\choose r}}{\sum_{k=0}^{r} {n \choose k} }\leq  O(\delta).
\]
A union bound over the $s$ samples $\bv^1,\ldots,\bv^s$ 
  lead to \Cref{eq:uv-close} using $s\cdot O(\delta)=o_n(1)$. 

\def\nil{\textsf{nil}}

To prove \Cref{eq:vw-close} we consider the following coupling, where
  one case leads to a distribution that is close to $(\bv^1,\ldots,\bv^s)$
  and the other case leads to a distribution that is close to $(\bw^1,\ldots,\bw^s)$:
\begin{flushleft}\begin{enumerate}
\item First draw $\bx^1,\ldots,\bx^{3s}$ independently and uniformly from the set of all points of $\zo^n$ that have Hamming weight at most $r$.
\item In the first case, we draw $3s$ independent random bits $\bb_1,\ldots,\bb_{3s}$ 
  from $\{0,1\}$.
If the number of $1$'s is at least $s$, return $\bar{\bv}^*$ as the ordered tuple of 
  the first $s$ points in $\bx^1,\ldots,\bx^{3s}$ with $\bb_i=1$;
if the number of $1$'s is less than $s$, return $\bar{\bv}^*=\nil$.
\item In the second case, we draw $\bar{\bX}$ and $\bar{\ba}$ as before and use them
  to define $\bg$ (together with $0^n$).
If the number of $\bx^i$ with $\bg(\bx^i)=1$ is at least $s$, return
  $\bar{\bw}^*$ as the ordered tuple of the first $s$ points in $\bx^1,\ldots,\bx^{3s}$
  with $\bg(\bx^i)=1$;
if the number is less than $s$, return $\bar{\bw}^*=\nil$.
\end{enumerate}\end{flushleft}
The total variation distance between $\bar{\bv}$ and $\bar{\bv}^*$ can 
  be easily bounded from above by the probability of $\bar{\bv}^*=\nil$ which 
  is $o_n(1)$ by the choice of $s$. 
  Given that $\dtv(\bar{\bv},\bar{\bv}^*)=o_n(1)$,
\Cref{eq:vw-close} follows from the following two claims, which bound
  $\dtv(\bar{\bw},\bar{\bw}^*)$ and $\dtv(\bar{\bv}^*,\bar{\bw}^*)$, respectively.

\begin{claim} \label{claim:hehe2}
$\dtv(\bar{\bw},\bar{\bw}^*)=o_n(1).$
Moreover, with probability $1-o_n(1)$,
  $\bar{\bw}$ and $\bar{\bX}$ (as in (c)) satisfy 
\[
\bw^i\in \sphere_r(0^n)\quad\text{and}\quad
\big|\bzeta\cdot \bw^i\big|\ge n^{0.49},\qquad
\text{for all $i\in [s]$ and $\bzeta \in \bar{\bX}$.}
\]
\end{claim}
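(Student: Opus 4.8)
By symmetry take $z=0^n$ and write $\bg = g_{0^n,\bar\bX,\bar\ba}$; let $B := \{x \in \bn : \ham(x,0^n) \le r\}$ be the radius-$r$ ball, so $\bg^{-1}(1) \subseteq B$ and $|B| = \sum_{k=0}^{r}\binom{n}{k} = \binom{n}{r}(1+O(\delta))$. I will handle $\dtv(\bar\bw,\bar\bw^*) = o_n(1)$ exactly as the ``first case'' was handled above, by coupling through the shared points $\bx^1,\dots,\bx^{3s}$. Concretely, draw $(\bar\bX,\bar\ba)$ and extend $\bx^1,\bx^2,\dots$ to an infinite i.i.d.\ sequence of uniform points of $B$; since $\bg^{-1}(1)\subseteq B$, the subsequence of $\bx^i$ satisfying $\bg$ is a sequence of i.i.d.\ uniform draws from $\bg^{-1}(1)$, so taking $\bar\bw$ to be its first $s$ terms realizes the law in item~(c) while agreeing with $\bar\bw^*$ whenever at least $s$ of $\bx^1,\dots,\bx^{3s}$ satisfy $\bg$. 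Hence $\dtv(\bar\bw,\bar\bw^*) \le \Pr[\bar\bw^* = \nil] = \Pr\big[\#\{i\in[3s]:\bg(\bx^i)=1\} < s\big]$, and everything reduces to showing this is $o_n(1)$.

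\textbf{Almost exactly half of $\sphere_r(0^n)$ gets label $1$.} Write $\boldsymbol{N} := |\bg^{-1}(1)\cap\sphere_r(0^n)| = \sum_{\bar b\in\{0,1\}^t}\ba_{\bar b}\,|\bX_{\bar b}|$, so $|\bg^{-1}(1)| = \sum_{k<r}\binom{n}{k} + \boldsymbol{N}$ with the first term equal to $O(\delta)\binom{n}{r}=o_n(1)\cdot\binom{n}{r}$. The estimate I will prove is that, with probability $1-o_n(1)$, $\boldsymbol{N} = \tfrac12\binom{n}{r}(1\pm o_n(1))$; in particular then $|\bg^{-1}(1)| \ge \tfrac13\binom{n}{r}$ and $p(\bg) := |\bg^{-1}(1)|/|B| = \tfrac12 + o_n(1) \ge 0.4$. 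Since $\Ex[\boldsymbol{N}] = \tfrac12\binom{n}{r}$ and, conditioning on $\bar\bX$ (over which the $\ba_{\bar b}$ are i.i.d.\ fair bits), the law of total variance gives $\mathrm{Var}[\boldsymbol{N}] = \tfrac14\,\Ex_{\bar\bX}\big[\sum_{\bar b}|\bX_{\bar b}|^2\big] = \tfrac14\big(\binom{n}{r} + \sum_{x\ne y\in\sphere_r(0^n)}\Prx_{\bar\bX}[x,y\text{ lie in the same part}]\big)$, it suffices to show the double sum is $O(\binom{n}{r}^2/s^{10})$ and then apply Chebyshev with window $\tfrac1{2s}\binom{n}{r}$. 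To bound the double sum, split pairs $x\ne y$ by the support overlap $k := |\mathrm{supp}(x)\cap\mathrm{supp}(y)|$. For $k \le r/\log\log n$, a short Rademacher-anticoncentration computation --- writing $\bzeta\cdot x = P+R$, $\bzeta\cdot y = Q+R$ with $R$ the $\bzeta$-sum over the $k$ common coordinates and $P,Q$ independent $\bzeta$-sums over the $r-k$ coordinates lying in exactly one support, conditioning on $R$, and using $\Pr[P=v]=O(1/\sqrt{r-k})$ for all $v$ together with $\Ex[R^2]=k$ --- shows that the single-hyperplane agreement probability $\Prx_{\bzeta}\big[\Indicator[\bzeta\cdot x\ge0]=\Indicator[\bzeta\cdot y\ge0]\big]$ is at most $\tfrac12 + O(k/r) + O(1/r) \le \tfrac12 + O(1/\log\log n)$; since $\log s \sim \log\log n$ we have $t/\log\log n \to 10$, so raising to the power $t=10\log s$ gives $\Prx_{\bar\bX}[x,y\text{ same part}] = O(2^{-t}) = O(s^{-10})$, and there are at most $\binom{n}{r}^2$ such pairs. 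The remaining pairs have overlap $> r/\log\log n$, far above the typical value $r^2/n = r/\log^2 n$; a hypergeometric tail bound caps their number at $\binom{n}{r}^2 e^{-\Omega(r)}$, and bounding their same-part probability by $1$ contributes only $\binom{n}{r}^2 e^{-\Omega(r)} \ll \binom{n}{r}^2/s^{10}$.

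\textbf{Finishing both parts.} Given $p(\bg)\ge 0.4$ with probability $1-o_n(1)$, I condition on that event and use $\Pr[\#\{i\in[3s]:\bg(\bx^i)=1\}<s] = \Pr[\Bin(3s,p(\bg))<s] \le \Pr[\Bin(3s,0.4)<s] \le e^{-\Omega(s)} = o_n(1)$ (multiplicative Chernoff; the mean $1.2s$ exceeds $s$ by a constant factor), bounding by $1$ on the complementary $o_n(1)$-probability event; this proves $\dtv(\bar\bw,\bar\bw^*) = o_n(1)$. For the ``moreover'' part, condition again on $|\bg^{-1}(1)| \ge \tfrac13\binom{n}{r}$. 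Then $\Pr[\bw^i\notin\sphere_r(0^n)\mid\bg] = \big(\sum_{k<r}\binom{n}{k}\big)/|\bg^{-1}(1)| = O(\delta)$, so (using $s\delta = o_n(1)$) a union bound puts all $\bw^i$ in $\sphere_r(0^n)$ with probability $1-o_n(1)$. Also, for a fixed $\zeta\in\{\pm1\}^n$ and uniform $\bx\sim\sphere_r(0^n)$ the value $\zeta\cdot\bx$ is a sum of $r$ Rademachers, so $\Pr[|\zeta\cdot\bx| < n^{0.49}] \le 2n^{0.49}\cdot O(1/\sqrt r) = O(n^{-0.01}\log n)$; hence $\Ex_{\bar\bX}\big[\#\{x\in\sphere_r(0^n):|\bzeta^j\cdot x| < n^{0.49}\text{ for some }j\in[t]\}\big] \le t\binom{n}{r}\cdot O(n^{-0.01}\log n)$, and by Markov this count is at most $\binom{n}{r}\cdot o_n(1/s)$ with probability $1-o_n(1)$ over $\bar\bX$. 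On the intersection of these events, for each $i$ we get $\Prx_{\bw\sim\bg^{-1}(1)}\big[\bw\in\sphere_r(0^n)\text{ and }|\bzeta^j\cdot\bw| < n^{0.49}\text{ for some }j\big] \le \binom{n}{r}\cdot o_n(1/s)\big/\big(\tfrac13\binom{n}{r}\big) = o_n(1/s)$, and a union bound over $i\in[s]$ gives the claim (here ``$\bzeta\in\bar\bX$'' is read as $\bzeta$ ranging over the defining vectors $\bzeta^1,\dots,\bzeta^t$).

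\textbf{The main obstacle.} The technical heart is the balanced-partition estimate $\Ex_{\bar\bX}[\sum_{\bar b}|\bX_{\bar b}|^2] = O(\binom{n}{r}^2/s^{10})$ --- that the $t=10\log s$ random hyperplanes cut $\sphere_r(0^n)$ into roughly equal pieces. This forces the anticoncentration bound $\tfrac12 + O(k/r)$ for low-overlap pairs to be done carefully enough that its $t$-th power stays $O(2^{-t})$, which is exactly why the overlap cutoff is $r/\log\log n$ rather than a constant fraction of $r$; the rest (the Chernoff/Chebyshev steps, the coupling, and the union bounds in the ``moreover'' part) is routine.
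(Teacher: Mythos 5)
Your proof is correct, but its key step takes a genuinely different route from the paper's. The paper never needs any global information about $\bg$: it first shows that with probability $1-o_n(1)$ the $3s$ points $\bx^1,\ldots,\bx^{3s}$ all lie in $\sphere_r(0^n)$ and pairwise share only $O(\delta^2 n)$ common support, then argues per pair (via a three-sum event over a single random $\bzeta$, holding with probability at least $1/5$) that each of the $t$ hyperplanes separates the pair with constant probability, so all $3s$ points land in \emph{distinct parts} of $\bar{\bX}$ w.h.p.; conditioned on this, the labels $\bg(\bx^1),\ldots,\bg(\bx^{3s})$ are i.i.d.\ fair coins and $\Pr[\bar{\bw}^*=\textsf{nil}]$ is just the probability of fewer than $s$ heads among $3s$ fair coins. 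The ``moreover'' part is then proved for $(\bar{\bw}^*,\bar{\bX},\bar{\ba})$ and transferred to $\bar{\bw}$ through the total-variation bound. You instead prove the global statement that, with probability $1-o_n(1)$ over $(\bar{\bX},\bar{\ba})$, the function $\bg$ labels roughly half of $\sphere_r(0^n)$ with $1$ (a second-moment computation over all pairs of sphere points, with the overlap-based case split and the $(1/2+O(k/r))^t$ estimate), and then finish with binomial concentration at density $\ge 0.4$; and you prove the ``moreover'' part directly for $\bar{\bw}$ by an expectation-plus-Markov count of small-margin sphere points, avoiding the TV transfer. Both arguments are sound; yours yields the extra structural fact $|\bg^{-1}(1)|=(1/2\pm o_n(1))\binom{n}{r}$ and treats the (c)-distribution directly, at the cost of a heavier global anticoncentration analysis, while the paper's local argument is shorter and—worth noting—its ``distinct parts'' conclusion is exactly the ingredient reused in the proof of the subsequent claim that $\dtv(\bar{\bv}^*,\bar{\bw}^*)=o_n(1)$, which your route would not supply. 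One small repair: in the ``moreover'' part you assert that for a \emph{fixed} $\zeta$ and uniform $\bx\sim\sphere_r(0^n)$ the value $\zeta\cdot\bx$ is a sum of $r$ Rademachers; with $\zeta$ fixed it is a without-replacement sample of fixed signs, not an i.i.d.\ sum. Since you only use the bound inside $\Ex_{\bar{\bX}}[\cdot]$, simply swap the quantifiers (fix $x\in\sphere_r(0^n)$, take $\bzeta\sim\{\pm 1\}^n$ random), after which the stated anticoncentration bound and the rest of your computation go through verbatim.
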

\begin{proof}
The distance $\dtv(\bar{\bw},\bar{\bw}^*)$ again can be bounded by the probability of $\bar{\bw}^*=\nil$. 
To this~end we first show that $\bx^1,\ldots,\bx^{3s}$ satisfy the following 
  two conditions with probability at least $1-o_n(1)$:
\begin{enumerate}
\item All $\bx^1,\ldots,\bx^{3s}$ lie in $\sphere_r(0^n)$; and
\item For every two points $\bx^i$ and $\bx^j$, we have 
$
\big|\{k \in [n]: \bx^i_k=\bx^j_k=1\}\big|=O(\delta^2 n)$. 
\end{enumerate}
Here the first item follows from the same argument used earlier in the proof of \Cref{eq:uv-close}. The second item follows from a {straightforward combinatorial argument to analyze a single $\bx^i,\bx^j$ pair} 
and a union bound on all ${3s \choose 2}$ pairs.

Now assuming that $x^1,\ldots,x^{3s}$ are $3s$ points that satisfy both conditions above,
  we show below that with probability at least $1-o_n(1)$, they lie in $3s$ distinct parts of
  $\bar{\bX}$.
To see that this is the case, we analyze the probability of $x^1$ and $x^2$ lying in the same 
  part of $\bar{\bX}$ and then apply a union bound across all ${3s \choose 2}$ pairs.

Let $I_1$ be the set of $i\in [n]$ with $x^1_i=1$ and $x^2_i=0$,
  $I_2$ be the set of $i\in [n]$ with $x^1_i=0$~and $x^2_i=1$,
  and $I$ be the set of $i\in [n]$ with $x^1_i=x^2_i=1$.
By the assumption on $x^1,x^2$ we have that $|I|=O(\delta^2 n)=O(\delta r)$, and $|I_1|=|I_2|=(1-O(\delta))r$.
Consider a uniform draw $\bzeta\sim \{\pm 1\}^n$ and let $\bs_1,\bs_2,\bs$ denote
  the sums of the coordinates of $\bzeta$ over indices in $I_1,I_2,I$, respectively.
By independence across the coordinates of $\bzeta$, the following compound event happens with probability at least $1/5$:
\[
|\bs|<0.01 \sqrt{r}, \quad
\bs_1 > 0.01 \sqrt{r} \quad \text{and}\quad
\bs_2 < -0.01\sqrt{r}.
\]
When this happens $x^1$ and $x^2$ must lie on different sides of 
the hyperplane defined by
$\bzeta \cdot x =  0.$
As a result, the probability of having no two points lying in the same part is at least
$$
1 - {3s \choose 2} \cdot \left({\frac 4 5}\right)^{10\log s}=1-o_n(1).
$$

Assuming that $x^1,\ldots,x^{3s}$ lie in distinct parts in $\bar{\bX}$,
  the probability of $\bar{\bw}^*=\nil$ is that of having
  less than $s$ many $1$'s in $3s$ fair coins, which is $o_n(1)$.
This finishes the proof of the first part.

For the second part of the claim, we first note that the proof above 
  shows that $(\bar{\bw},\bar{\bX},\bar{\ba})$ (as in (c))
  and $(\bar{\bw}^*,\bar{\bX},\bar{\ba})$ (as in item 3 above)  
  have total variation distance at most $o_n(1)$. So it suffices to prove the statement for the latter.
As shown above, with probability at least $1-o_n(1)$ we have
  $\bx^i\in \sphere_r(0^n)$ for all $i\in [s]$.
When this is the case, a random $\bzeta$ satisfies 
  $|\bzeta\cdot \bx^i|< n^{0.49}$ with probability at most
  $O(n^{0.49}/\sqrt{r})\le O(\log n/n^{0.01})$.
The second part then follows from a union bound over
  the $O(s\log s)$ many pairs of $\bx^i$ and $\bzeta$ in $\bar{\bX}$.
\end{proof}

\begin{claim}
$\dtv(\bar{\bv}^*,\bar{\bw}^*)=o_n(1)$.
\end{claim}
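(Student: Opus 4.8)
**Plan for proving $\dtv(\bar{\bv}^*,\bar{\bw}^*)=o_n(1)$.**

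The key observation is that both $\bar{\bv}^*$ and $\bar{\bw}^*$ are obtained by the \emph{same} process applied to the \emph{same} sample $\bx^1,\dots,\bx^{3s}$: first draw the $3s$ points uniformly from the weight-$\le r$ ball, then select the first $s$ of them that pass a ``keep'' test (returning $\nil$ if fewer than $s$ pass). The only difference is the keep-test: in the $\bar{\bv}^*$ case each $\bx^i$ is kept according to an independent fair coin $\bb_i$, whereas in the $\bar{\bw}^*$ case $\bx^i$ is kept iff $\bg(\bx^i)=1$. So the plan is to couple the two processes by sharing $\bx^1,\dots,\bx^{3s}$ and showing that, conditioned on a high-probability event on this shared sample, the induced distribution on ``which points are kept'' is identical (or statistically negligibly different) in the two cases.

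Concretely, I would condition on the event established inside the proof of \Cref{claim:hehe2}: that all of $\bx^1,\dots,\bx^{3s}$ lie in $\sphere_r(0^n)$ and, over the draw of $\bar{\bX}$, lie in $3s$ \emph{distinct} parts of the partition. This event has probability $1-o_n(1)$. Now condition further on $\bar{\bX}$ being such a ``separating'' partition. Since the $3s$ points lie in distinct parts $\bX_{\bar b^{(1)}},\dots,\bX_{\bar b^{(3s)}}$ with all $\bar b^{(i)}$ distinct, the labels $\ba_{\bar b^{(1)}},\dots,\ba_{\bar b^{(3s)}}$ are $3s$ \emph{independent fair bits} (by Step 3 of the no-construction, the $\ba_{\bar b}$ are independent across distinct $\bar b$). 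Hence $\big(\Indicator[\bg(\bx^1)=1],\dots,\Indicator[\bg(\bx^{3s})=1]\big)$ is distributed exactly as $3s$ independent fair coins — which is precisely the distribution of $(\bb_1,\dots,\bb_{3s})$ in the $\bar{\bv}^*$ process. Since both processes then deterministically extract ``the first $s$ indices $i$ with keep-bit $1$, output $(\bx^i)$ for those $i$, or $\nil$,'' they produce identically distributed outputs conditioned on this event.

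Putting it together: let $E$ be the event (on $\bx^1,\dots,\bx^{3s}$ and $\bar{\bX}$) that the points are in $\sphere_r(0^n)$ and in distinct parts. On $E$, the conditional laws of $\bar{\bv}^*$ and $\bar{\bw}^*$ coincide; off $E$ (probability $o_n(1)$) the two laws can differ arbitrarily. A standard coupling bound then gives $\dtv(\bar{\bv}^*,\bar{\bw}^*)\le \Pr[\neg E]=o_n(1)$, completing the proof. The only mild subtlety — which is already handled by \Cref{claim:hehe2} — is verifying the ``distinct parts'' claim; there is no genuine obstacle here, since the distinctness of the parts is exactly what decouples the $\bg$-labels into independent fair bits, and the rest is a formal coupling argument. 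Chaining this with $\dtv(\bar{\bv},\bar{\bv}^*)=o_n(1)$ and $\dtv(\bar{\bw},\bar{\bw}^*)=o_n(1)$ yields \Cref{eq:vw-close}.
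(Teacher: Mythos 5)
Your proposal is correct and follows essentially the same route as the paper: condition on the event that all $\bx^1,\ldots,\bx^{3s}$ lie in $\sphere_r(0^n)$ and in distinct parts of $\bar{\bX}$ (shown to hold with probability $1-o_n(1)$ in the proof of the previous claim), observe that on this event the bits $\bg(\bx^1),\ldots,\bg(\bx^{3s})$ are independent uniform bits so the two selection processes yield identically distributed outputs, and bound the total variation distance by the failure probability of the event. No gaps; your coupling formulation matches the paper's argument.
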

\begin{proof}
Consider the event over $\bx^1,\ldots,\bx^{3s}$ and $\bar{\bX}$ such that
\begin{enumerate}
\item All $\bx^1,\ldots,\bx^{3s}$ lie in $\sphere_r(0^n)$; and
\item All $\bx^1,\ldots,\bx^{3s}$ lie in distinct parts of $\bar{\bX}$.
\end{enumerate}
We already showed in the proof of the previous claim that this event happens with
  probability at least $1-o_n(1)$.
When this event happens, we note that $\bg(\bx^1),\ldots,\bg(\bx^{3s})$ are 
  indeed independent uniform random bits and thus, $\bar{\bv}^*$ and $\bar{\bw}^*$  share the same distribution.
\end{proof}
This finishes the proof of \Cref{claim:samp-distribution-similar}.
\end{proof}

\def\ALG{\textsf{ALG}}

\subsection{Queries do not help} \label{sec:queries-dont-help}

Let $q=n^{0.01}$.
Assume for a contradiction that
there exists an $s$-sample, $q$-query, non-adaptive algorithm that accepts $\boldf\sim \Dyes$ with probability
  at least $0.99$ (over the randomness of the draw of $\boldf$ and 
  the randomness of $s$ samples from $\boldf^{-1}(1)$)
  and rejects $\bg\sim \Dno$ with probability at least $0.99$.
Then there exists a deterministic, $s$-sample, $q$-query, non-adaptive algorithm, which we refer to as $\ALG$, that satisfies
\begin{align}\label{eq:contradiction}
\Prx_{\substack{\boldf\sim \Dyes\\ \bv^1,\ldots,\bv^s\sim\boldf^{-1}(1)}}\big[\ALG(\boldf;\bv^1,&\ldots,\bv^s)=1\big]
-\Prx_{\substack{\bg\sim \Dno\\ \bw^1,\ldots,\bw^s\sim \bg^{-1}(1)}} \big[\ALG(\bg;\bw^1,\ldots, \bw^s)=1\big]\\[-2ex] &\ \ \ \ \ \ \ \ \ \ \ \ \ \ \ \ \ \ \ \ \ \ \ \ \ \ \  \ \ \ \ \ \ \ \ \ \ \ \ \ \ \ \ \ \ \ge {0.99-(1-0.12\cdot 0.99)\ge 0.1 ,}\nonumber
\end{align}
where $\ALG(f;x^1,\ldots,x^s)=1$ means that $\ALG$ accepts when it is given
  samples $x^1,\ldots,x^s$ and query access to $f$, and $=0$ means that $\ALG$ rejects.
As $\ALG$ is deterministic and non-adaptive, equivalently
  one can view it as a tuple of $q$ maps $\Phi_i:(\{0,1\}^n)^s\rightarrow \{0,1\}^n$ {and a final combining function $\Psi: \zo^q \to \{0,1\},$}
  where $\Phi_i(x^1,\ldots,x^s)$ denotes the $i$th (non-adaptive) query that $\ALG$ makes after receiving samples $x^1,\ldots,x^s$
 {and $\Psi(b_1,\dots,b_q) \in \{0,1\}$ is the final output bit that $\ALG$ generates when its $q$ queries are answered with bits $b_1,\dots,b_q\in \{0,1\}$}.

We delay the proof of the main technical lemma, \Cref{lemma:maintechlower}, to
  \Cref{proofofmaintechlower}, and first use it to prove that \Cref{eq:contradiction} cannot
  be true in the rest of this subsection:

\begin{lemma}\label{lemma:maintechlower}
Fix any map $\Phi:(\{0,1\}^n)^s\rightarrow \{0,1\}^n$.
Draw $\bz\sim \{0,1\}^n$ uniformly at random and draw $s$ points $\bu^1,\ldots,\bu^s \sim \sphere_r(\bz)$ independently and uniformly.
Let $\by=\Phi(\bu^1,\ldots,\bu^s)$. Then the following
  event occurs with probability at least $1-o(1/q)$:
$$ 
\text{ $\by\notin \sphere_r(\bz)$, \quad or \quad
 $\ham(\by,\bu^i)<n^{0.4 }$ for some $i\in [s]$.}
$$
\end{lemma}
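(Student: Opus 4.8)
I would show that once the $s$ samples have been received the center $\bz$ is still spread over an exponentially large set, so that no deterministic function $\by=\Phi(\bu^1,\dots,\bu^s)$ of the samples can lie in $\sphere_r(\bz)$ for more than an $o(1/q)$ fraction of the surviving centers, unless $\by$ is forced to be Hamming-close to one of the samples. First I would condition on $\bu^1=u^1,\dots,\bu^s=u^s$; then $\by=\Phi(u^1,\dots,u^s)$ is determined and, by Bayes, the posterior of $\bz$ is uniform over $S:=\bigcap_{i\in[s]}\sphere_r(u^i)$. Since the event ``$\ham(\by,\bu^i)<n^{0.4}$ for some $i$'' depends only on the samples, it is enough to prove that with probability $1-o(1/q)$ over the samples, \emph{every} $y$ with $\ham(y,u^i)\ge n^{0.4}$ for all $i$ satisfies $\Pr_{\bz\sim\mathrm{Unif}(S)}[\ham(y,\bz)=r]=o(1/q)$; the lemma then follows by averaging over the samples and discarding the $o(1/q)$-probability event that the samples are atypical. (It is important to keep $\bz$ random: for a fixed center there is an adversarial $\Phi$ always outputting a far-away weight-$r$ string, so one cannot reduce to ``$\bz=0^n$''.)

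\textbf{Reformulation via bisecting sets.} Encode $\bz$ by $E:=\mathrm{supp}(\bz\oplus u^1)$ and $\by$ by $F:=\mathrm{supp}(\by\oplus u^1)$, and put $\Delta_i:=\mathrm{supp}(u^1\oplus u^i)$ for $i=2,\dots,s$ (with $\Delta_1:=\emptyset$). A short computation shows: $\bz\in S$ iff $|E|=r$ and $|E\cap\Delta_i|=|\Delta_i|/2$ for all $i$ (each $|\Delta_i|$ is even, being a symmetric difference of two $r$-sets); $\ham(\by,\bu^i)\ge n^{0.4}$ iff $|F\,\triangle\,\Delta_i|\ge n^{0.4}$; and $\ham(\by,\bz)=r$ iff $|E\cap F|=|F|/2$. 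We may also assume $|F|$ even and $|F|\le 2r$, since otherwise $|E\cap F|=|F|/2$ is impossible for $|E|=r$. Writing $S_\Delta:=\{E\subseteq[n]:|E|=r,\ |E\cap\Delta_i|=|\Delta_i|/2\ \forall i\}$, the claim reduces to: for a \emph{generic} tuple $(\Delta_2,\dots,\Delta_s)$ (one occurring with probability $1-o(1/q)$ under the sampling process, so all atom sizes, pairwise intersections, etc., are near their typical values) and every admissible $F$, $\Pr_{\bm{E}\sim\mathrm{Unif}(S_\Delta)}[|\bm{E}\cap F|=|F|/2]=o(1/q)$.

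\textbf{Anti-concentration.} Let $\{A_\tau\}_{\tau\in\{0,1\}^{s-1}}$ be the partition of $[n]$ generated by the $\Delta_i$, i.e.\ $A_\tau=\{j:\mathbf{1}[j\in\Delta_i]=\tau_{i-1}\text{ for all }i\}$; for generic $\Delta$ the big atom $A_{0^{s-1}}=\bigcap_i\overline{\Delta_i}$ has size $(1-o(1))n$, the remaining atoms together have size $\Theta(sr)=o(n)$, and $2^{s-1}=n^{o(1)}$. The key structural fact is that, for any atom $A_\tau$, conditioned on $\bm{E}$ restricted to $[n]\setminus A_\tau$, the set $\bm{E}\cap A_\tau$ is uniform among all subsets of $A_\tau$ of one conditionally-determined size $n_\tau$: fixing $\bm{E}$ off $A_\tau$ pins $|\bm{E}\cap A_\tau|$ via one of the defining equations and imposes no further restriction. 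Hence $|\bm{E}\cap F\cap A_\tau|$ (given $\bm{E}|_{[n]\setminus A_\tau}$) is $\mathrm{Hypergeom}(|A_\tau|,|F\cap A_\tau|,n_\tau)$, and $|\bm{E}\cap F|$ equals it plus a constant, so the conditional probability of $|\bm{E}\cap F|=|F|/2$ is at most the maximum point mass of that hypergeometric, $O(1/\sqrt{V_\tau})$ with $V_\tau\asymp n_\tau\,|F\cap A_\tau|\,|A_\tau\setminus F|/|A_\tau|^2$. It thus suffices to produce, for each admissible $F$, an atom $A_\tau$ with $V_\tau=\omega(q^2)$ with conditional probability $1-o(1/q)$. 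I would do this by cases: (i) a polynomial fraction of $F$ sits in $A_{0^{s-1}}$, where I use the big atom -- its $\bm{E}$-mass $n_{0^{s-1}}$ is typically polynomially large, which is the quantitative form of the informal ``$\kappa=\Omega(n^{-0.1})$'' phenomenon from the overview; (ii) $F$ meets some atom in a balanced way -- some atom must, since $|F|\ge n^{0.4}$ is distributed over only $n^{o(1)}$ atoms -- where I use that atom directly; (iii) $F$ is so aligned with the partition that every $V_\tau$ is negligible, in which case $|\bm{E}\cap F|$ is controlled by a partial sum of the profile $(n_\tau)_\tau$, which is a weighted product of binomials; such a partial sum either has variance $\Omega(r)$ (on any coordinate not frozen by the $s$ linear constraints) or lies far from $|F|/2$, and the hypothesis $|F\,\triangle\,\Delta_i|\ge n^{0.4}$ is exactly what rules out the frozen alignments. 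A union bound over the atypical samples and the $o(1/q)$-probability failure events of the auxiliary concentration estimates completes the argument.

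\textbf{Main obstacle.} The hard part is case (iii): ruling out every adversarial $F$, which may be tailored to the samples and deliberately lined up with the atom partition so as to kill the obvious hypergeometric randomness. The parameters $\delta=1/\log^2 n$, $r=\delta n$, $s=\Theta(\log n/\log\log n)$, $t=n^{0.4}$, $q=n^{0.01}$ are balanced so that only a polynomial-in-$n$ amount of slack is available, so all the estimates -- the atom sizes, the typical value and concentration of the profile $(n_\tau)$, the variances $V_\tau$, and the large-deviation bounds -- have to be carried out with care, and checking that each auxiliary concentration statement fails with probability $o(1/q)$ rather than just $o(1)$ is where the bulk of the work goes.
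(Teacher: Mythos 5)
Your setup matches the paper's proof in its essentials: the posterior of $\bz$ given the samples is indeed uniform over $\bigcap_i \sphere_r(u^i)$ (the paper's $\calZ_U$); your atoms $A_\tau$ are a coarsening of the paper's column classes $\col_c(U)$; and your structural observation --- that conditioning on $E=\mathrm{supp}(\bz\oplus u^1)$ outside a single atom leaves $E\cap A_\tau$ uniform over subsets of one determined size, so that $|E\cap F|$ is a shifted hypergeometric --- is correct and is exactly the ``reveal everything outside a few classes and anti-concentrate'' mechanism of the paper. Your cases (i)--(ii) correspond to Case~1 of the proof of \Cref{lemma:easy2} (some class on which $y$ takes each value at least $n^{0.2}$ times), and, once the concentration of the conditional sizes $n_\tau$ is folded into a typicality event with failure probability $o(1/q)$ (the paper's \Cref{def:good}, \Cref{def:typical} and \Cref{lemma:easy1}), they give the same $O(n^{-0.05})$ point-mass bound.

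The gap is your case (iii), which is precisely where the difficulty of the lemma lives (Case~2 of \Cref{lemma:easy2}), and your treatment of it is an assertion rather than an argument. Concretely: (1)~since in case (iii) $F$ is only approximately a union of atoms, the event $|E\cap F|=|F|/2$ only constrains the rounded sum $\sum_{\tau\in B} n_\tau$ to a \emph{window} of width roughly $2^s n^{0.2}$, so neither a variance bound nor a single-point mass bound suffices; you need the law of this constrained sum to be multiplicatively flat over a much wider window. This window-versus-flatness comparison is exactly what the paper engineers (flat over width $n^{0.4}$ against a target window of width $n^{0.21}$, yielding $O(n^{-0.19})=o(1/q)$), and it is missing from your sketch. (2)~The claim that $\ham(y,u^i)\ge n^{0.4}$ ``rules out the frozen alignments'' needs a proof: the exactly frozen sums correspond to $F\approx\emptyset,\,[n],\,\Delta_i,\,[n]\setminus\Delta_i$, of which only $F\approx\Delta_i$ is excluded by the hypothesis (the others must be dispatched by the ``far from $|F|/2$'' branch); more seriously, the adversary chooses $F$ after seeing $U$, so you must also exclude \emph{nearly} frozen or otherwise aligned $B$'s for which the constrained sum has small spread yet mean inside the target window --- and establishing that no such $F$ exists is the actual content of the lemma, not a remark. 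The paper never proves such a dichotomy; instead it rounds $y$ to a pattern string $y^*$, shows $b_{0^s}=0$ and $b_{1^s}=1$, uses $y^*\ne u^s$ (a consequence of the distance hypothesis) to extract two adjacent pattern pairs $(c,c^{(s)})$ and $(c',c'^{(s)})$ with different labels, reveals everything outside the four corresponding classes plus two counts, and reduces to a single free parameter $\bk_0$ whose law is shown to be flat on a width-$n^{0.4}$ interval by an explicit ratio computation. Without a worked-out substitute for that step, your case (iii) restates the difficulty rather than resolving it, so the proposal as written does not prove \Cref{lemma:maintechlower}.
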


We use \Cref{lemma:maintechlower} to prove that \Cref{eq:contradiction} cannot hold.

First, we have from the first part of \Cref{claim:samp-distribution-similar} that 
$$
\left|\Prx_{\substack{\bz\sim \{0,1\}^n\\ \bu^1,\ldots,\bu^s\sim \sphere_r(\bz)}}\big[\ALG(f_{\bz};\bu^1,\ldots,\bu^s)=1\big]
-\Prx_{\substack{\boldf\sim \Dyes\\ \bv^1,\ldots,\bv^s\sim\boldf^{-1}(1)}}\big[\ALG(\boldf;\bv^1,\ldots,\bv^s)=1\big]
\right|=o_n(1).
$$
As a result, to show that \Cref{eq:contradiction} cannot hold,
  it suffices to show that 
$$
\left|\Prx_{\substack{\bz\sim \{0,1\}^n\\ \bu^1,\ldots,\bu^s\sim \sphere_r(\bz)}}\big[\ALG(f_{\bz};\bu^1,\ldots,\bu^s)=1\big]
-\Prx_{\substack{\bg\sim \Dno\\ \bw^1,\ldots,\bw^s\sim\bg^{-1}(1)}}\big[\ALG(\bg;\bw^1,\ldots,\bw^s)=1\big]
\right|=o_n(1).
$$
Using \Cref{claim:samp-distribution-similar} (both parts) and 
  the  coupling characterization
of total variation distance, 
there is a distribution
  $\calD$ such that 
  $$\left(\big(\bz^{1},(\bu^1,\ldots,\bu^s)\big),\big(\bz^2,(\bw^1,\ldots,\bw^s),\bar{\bX},\bar{\ba}\big)\right)\sim \calD$$ satisfies the following three conditions:
\begin{flushleft}\begin{enumerate}
\item The marginal distribution of $(\bz^1,(\bu^1,\ldots,\bu^s))$ is the 
  same as drawing $\bz^1\sim \{0,1\}^n$ uniformly and then $\bu^1,\ldots,\bu^s\sim \sphere_r(\bz^1)$ independently and uniformly.
\item The marginal distribution of $(\bz^2,(\bw^1,\ldots,\bw^s),\bar{\bX},\bar{\ba}))$
  is the same as drawing $\bz^2\sim \{0,1\}^n$, $\bar{\bX}$ and $\bar{\ba}$ uniformly
  at random as in the definition of $\Dno$, and then $\bw^1,\ldots,\bw^s\sim \bg^{-1}(1)$,
  where $\bg=g_{\bz^2,\bar{\bX},\bar{\ba}}$ is the no- function
  defined using $\bz^2,\bar{\bX}$ and $\bar{\ba}$.
\item With probability $1-o_n(1)$ over $\calD$, 
  we have $\bz^1=\bz^2$ and $(\bu^1,\ldots,\bu^s)=(\bw^1,\ldots,\bw^s)$.
\end{enumerate}\end{flushleft}

Now let $\Phi_1,\ldots,\Phi_q$ be the query maps of $\ALG$.
Consider the following event over $\calD$:
\begin{flushleft}\begin{enumerate}
\item Every $\by^i=\Phi_i(\bu^1,\ldots,\bu^s)$ satisfies  $\by^i\notin \sphere_r(\bz^1)$ or
  $\ham(\by^i,\bu^j)<n^{0.4}$ for some $j\in [s]$.
\item Every $\bw^i$  satisfies that $|\bzeta\cdot (\bz^2\oplus \bw^i)|\ge n^{0.49}$  
for every hyperplane $\bzeta$ in $\bar{\bX}$.
\end{enumerate}\end{flushleft}
We have directly from \Cref{lemma:maintechlower} (and a union bound over the $q$ functions $\Phi_1,\dots,\Phi_q$) that the first part of the event occurs
  with probability $1-o_n(1)$.
The second part of the event follows directly from 
  \Cref{claim:hehe2}.

As a result, by a union bound, a draw from $\calD$ satisfies
  the two conditions above together with 
$$
\bz^1=\bz^2\quad\text{and}\quad (\bu^1,\ldots,\bu^s)=(\bw^1,\ldots,\bw^s)
$$
with probability at least $1-o_n(1)$.
We finish the proof by showing that when this occurs,
  running $\ALG$ on $f_{\bz^1}$ with samples $\bu^1,\ldots,\bu^s$ must return
   the same answer as running $\ALG$ on $\bg$ with samples $\bw^1,\ldots,\bw^s$, 
   which contradicts \Cref{eq:contradiction}.

Let $((z^1,(u^1,\ldots,u^s),(z^2,(w^1,\ldots,w^s),\bar{X},\bar{a}))$
  be such a tuple.
Let $z=z^1=z^2$.
Given that we have  $(u^1,\ldots,u^s)=(w^1,\ldots,w^s)$,
  the queries $y^1,\ldots,y^q$ made by $\ALG$ are the same.
Then it suffices to show that $f_z(y^i)=g_{z,\bar{X},\bar{a}}(y^i)$ for all $i\in [q]$.
To see this is the case, for each query $y_i$:
\begin{flushleft}\begin{enumerate}
\item Either $y^i\notin \sphere_r(z)$, in which case trivially we have 
  the results are the same; or
\item $y^i\in \sphere_r(z)$ and has Hamming distance at most $n^{0.4 }$ from some $u^j=w^j$.
Then we have $f_z(y^i)=1$ just because $y^j\in \sphere_r(z)$.
On the other hand, $g(y^i)=1$ because 
  $y^i$ must lie in the same part  {of $\bar{X}$} as $w^j$ given  
  $\ham(y^i,w^j)\le n^{0.4}$ but $|\zeta\cdot (z\oplus w^j)|
  > n^{0.48}$ for all $\zeta$ in $\bar{X}$.
\end{enumerate}\end{flushleft}
This finishes the proof of the lower bound except for the proof of \Cref{lemma:maintechlower}.

\subsection{Proof of \Cref{lemma:maintechlower}}\label{proofofmaintechlower}

Fix a map $\Phi:(\{0,1\}^n)^s\rightarrow \{0,1\}^n$. 
Let $\bz\sim \{0,1\}^n$ and $\bu^1,\ldots,\bu^s$ be $s$ independent and uniform
  draws from $\sphere_r(\bz)$.
Given a tuple $U=(u^1,\ldots,u^s)$ in the support of $\bU=(\bu^1,\ldots,\bu^s)$,
  we write $\calZ_U$ to denote the distribution of $\bz$ conditioning on $U$.

At a high level, the proof of \Cref{lemma:maintechlower} proceeds in two steps:
\begin{flushleft}\begin{enumerate}
\item We first define a condition on $U$ capturing the notion that a tuple is ``typical'', and we show that for $(\bz,\bU)$ drawn as described above, $\bU=(\bu^1,\ldots,\bu^s)$ is typical with probability at
  least $1-1/n^{\omega_n(1)}$ (see \Cref{def:typical} and \Cref{lemma:easy1}); 
\item Given any typical $U=(u^1,\ldots,u^n)$, letting $y=\Phi(U)$ and assuming that $\ham(y,u^i)\ge n^{0.4}$ for all $i\in [s]$, 
  we show that $y\notin \sphere_r(\bz)$ with probability at least $1-o(1/q)$
  over the randomness of $\bz\sim \calZ_U$ (see \Cref{lemma:easy2}). 
\end{enumerate}\end{flushleft}
\Cref{lemma:maintechlower} then follows directly.

We start with the definition of typical tuples.
Given a tuple $U=(u^1,\ldots,u^s)$ in the support, we partition $[n]$ into $2^s$ sets:
  $\col_c(U)$ for each $c\in \{0,1\}^s$, where $i\in \col_c(U)$ iff 
  $c=(u^1_i,\ldots,u^s_i)$.
(Equivalently we view $U$ as an $s\times n$ matrix where the row vectors are $u^1,\ldots,u^s$; $\col_c(U)$ contains indices of columns that are $c$.)

To gain some intuition for the definition of typical tuples, let us consider the size of $\col_c(\bU)$~for a fixed string $c\in \{0,1\}^s$ over a random $\bU = (\bu^1, \dots, \bu^s)$. Writing $|c|$ for the Hamming weight of the $s$-bit string $c$, one may observe that for any index $i \in [n]$, if $z_i = 0$ then the probability that $i$ falls into $\col_c(\bU)$ is $\delta^{\lvert c \rvert} (1-\delta)^{s-\lvert c \rvert}$; if $z_1 = 1$, the probability  is $\delta^{s-\lvert c \rvert} (1-\delta)^{\lvert c \rvert}$. Since $\bz$ is drawn uniformly at random, in expectation, $\col_c(\bU)$ contains 
$$
\text{$\frac{n}{2}\left(\delta^{\lvert c \rvert} (1-\delta)^{s-\lvert c \rvert}\right)$ indices $i$ with $z_i = 0$ and $\frac{n}{2}\left(\delta^{s-\lvert c \rvert} (1-\delta)^{\lvert c \rvert}\right)$ indices $i$ with $z_i = 1$}.
$$ 
In the definitions below, we say that $U$ is \emph{typical} if the number of $0$'s and $1$'s
  of $\bz\sim \calZ_U$ for each $\col_c(U)$ matches the expectation above within
  a reasonable error.


\newcommand{\goodzUc}[3]{\mathsf{good}(#1,#2,#3)}
\newcommand{\goodzU}[2]{\mathsf{good}(#1,#2)}

\begin{definition}\label{def:good}
We say a triple $(z, U, c)$, where $z \in \{0,1\}^n$, $U = (u^1, \dots, u^s)$, and $c \in \{0,1\}^s$, is \emph{good}, denoted by $\goodzUc{z}{U}{c}$, if the following two conditions hold:
\begin{align*}
\left| i\in \col_c(U): z_i=0\right|&=\frac{n}{2}\left( 
   \delta^{ |c|}(1-\delta)^{s-|c|} \right)\pm n^{0.51}\quad\text{and} \\[0.8ex]
\left| i\in \col_c(U): z_i=1\right|& = \frac{n}{2}\left(\delta^{s-|c|} 
  (1-\delta)^{ |c|} \right)\pm n^{0.51}.
\end{align*}
When $(z, U, c)$ is good for all $c \in \{0,1\}^s$, we say the pair $(z,U)$ is \emph{good}, denoted by $\goodzU{z}{U}$.
\end{definition}

\begin{definition} \label{def:typical}
We say a tuple $U = (u^1, \dots, u^s)$ in the support of $\bU$ is \emph{typical} if
    \begin{equation*}
        \Prx_{\bz \in \calZ_U} \left[ \goodzU{\bz}{U} \right] \ge 1 - n^{-\omega_n(1)}.
    \end{equation*}
\end{definition}

Note that for any typical $U$, the size of each $\col_c(U)$ must be close to the expectation as well:
\begin{equation*}
\left|\col_c(U)\right|=\frac{n}{2}\left(\delta^{|c|} 
  (1-\delta)^{s-|c|}+\delta^{s-|c|}
  (1-\delta)^{|c|}\right)\pm 2n^{0.51}.
\end{equation*}

Now we prove that $\bU=(\bu^1,\ldots,\bu^s)$ is typical with high probability:

\begin{lemma}\label{lemma:easy1}
With probability at least
  $1-1/n^{\omega_n(1)}$, $\bU=(\bu^1,\ldots,\bu^s)$ is typical.
\end{lemma}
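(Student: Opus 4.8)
The plan is to prove the statement in two nested rounds of concentration. The outer randomness is the draw of $\bz \sim \{0,1\}^n$ together with $\bU = (\bu^1,\ldots,\bu^s) \sim \sphere_r(\bz)^s$; the inner randomness is a fresh draw $\bz' \sim \calZ_U$ conditioned on a fixed tuple $U$. What we must show is that, with probability $1 - n^{-\omega_n(1)}$ over $(\bz,\bU)$, the conditional probability $\Pr_{\bz' \sim \calZ_U}[\goodzU{\bz'}{U}]$ is at least $1 - n^{-\omega_n(1)}$. By symmetry (and as done throughout the section) I would fix $\bz = 0^n$; then each $\bu^i$ is an independent uniform weight-$r$ string, and $\calZ_U$ is the posterior on the center given the observed columns $\col_c(U)$.

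The first main step is to show that, conditioned on $U$, the posterior $\calZ_U$ factors independently across the column classes: given $U$, the event ``$z_i = b$'' for indices $i \in \col_c(U)$ are exchangeable within each class and independent across classes, because the likelihood of $U$ under center $z$ depends only on how many coordinates in each $\col_c(U)$ are flipped by $z$. Concretely, within $\col_c(U)$ the posterior makes each coordinate independently equal to $1$ with a probability $\theta_c$ that is determined by the ratio $\delta^{s-|c|}(1-\delta)^{|c|} \,/\, \big(\delta^{|c|}(1-\delta)^{s-|c|} + \delta^{s-|c|}(1-\delta)^{|c|}\big)$ (up to the mild correlation coming from the global constraint that $\bu^i$ has weight exactly $r$; I would handle this either by passing to the product measure and correcting with a polynomial factor, or by a direct ballot-type estimate). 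Once this independence/near-independence is in hand, for a \emph{fixed} typical-looking $U$ the quantity $|\{i \in \col_c(U): z'_i = 1\}|$ is a sum of $|\col_c(U)|$ independent (or negatively associated) Bernoullis, and a Chernoff/Hoeffding bound gives that it lies within $\pm n^{0.51}$ of its mean $|\col_c(U)|\cdot\theta_c$ except with probability $\exp(-\Omega(n^{1.02}/|\col_c(U)|)) \le \exp(-\Omega(n^{0.02})) = n^{-\omega_n(1)}$, since $|\col_c(U)| \le n$. A union bound over the $2^s = n^{o(1)}$ classes $c$ then yields $\Pr_{\bz'}[\goodzU{\bz'}{U}] \ge 1 - n^{-\omega_n(1)}$ \emph{provided} the mean $|\col_c(U)|\cdot\theta_c$ is itself within $\pm \tfrac12 n^{0.51}$ of the target $\tfrac n2 \delta^{s-|c|}(1-\delta)^{|c|}$, i.e.\ provided the class sizes $|\col_c(U)|$ are themselves close to expectation.

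The second main step is therefore to show that $|\col_c(\bU)|$ concentrates around $\tfrac n2(\delta^{|c|}(1-\delta)^{s-|c|} + \delta^{s-|c|}(1-\delta)^{|c|})$ with deviation $o(n^{0.51})$ outside a $n^{-\omega_n(1)}$-probability event, this time over the \emph{outer} randomness of $\bU \sim \sphere_r(0^n)^s$. For a coordinate $i$, the indicator $\Indicator[i \in \col_c(\bU)]$ depends only on the bits $(\bu^1_i,\ldots,\bu^s_i)$; across distinct $i$ these are only weakly dependent (the dependence is entirely through the $s$ weight constraints $\sum_i \bu^j_i = r$). I would make this rigorous by first working with the product model in which each $\bu^j$ is an i.i.d.\ $\mathrm{Bernoulli}(\delta)^n$ string, where $|\col_c(\bU)|$ is an exact Binomial and Chernoff gives deviation $O(\sqrt{n}\log n) = o(n^{0.51})$ with failure probability $n^{-\omega_n(1)}$, and then transferring to the exact-weight model: conditioning on the $s$ events $\{\sum_i \bu^j_i = r\}$, each of which has probability $n^{-\Theta(1)}$, inflates any failure probability by at most $n^{O(1)}$, which is absorbed by $n^{-\omega_n(1)}$. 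Combining the two steps: the outer event ``all $|\col_c(\bU)|$ are within $o(n^{0.51})$ of their means for all $c$'' holds with probability $1 - n^{-\omega_n(1)}$, and on that event the inner Chernoff bound shows $U$ is typical, which is exactly the claim.

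The step I expect to be the main obstacle is controlling the weight-constraint dependence cleanly — both in identifying the exact form of the posterior $\calZ_U$ (the conditioning ``$\bu^i \in \sphere_r(\bz)$'' couples the coordinates, so $\calZ_U$ is not literally a product measure, only close to one) and in the transfer from the i.i.d.\ model to the exact-Hamming-sphere model in both rounds. The saving grace is that the error budget is extremely generous: we only need deviations $n^{0.51} \gg \sqrt n$ and failure probabilities $n^{-\omega_n(1)}$, while a single weight constraint costs only a $n^{O(1)}$ factor and $s = o(\log n)$ of them cost only $n^{o(1)\cdot O(1)}$, so the crude union-bound/conditioning approach suffices and no sharp combinatorial identity is needed.
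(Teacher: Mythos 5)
Your structural reduction is sound up to one point, and that point is the crux. It is indeed true (and worth stating) that if one generates $(\bz,\bU)$ by drawing $\bz$ uniformly and flipping each bit independently with probability $\delta$, then conditioned on $\bU=U$ the center is a product measure, each coordinate in $\col_c(U)$ being $1$ independently with probability $\theta_c$, and that $\calZ_U$ is exactly this product measure conditioned further on the $s$ exact events $\{\ham(u^j,\bz)=r\}$. The gap is in how you remove that conditioning for a \emph{fixed} $U$: you bound $\Pr_{\calZ_U}[\neg\goodzU{\bz}{U}]$ by the ratio of the product-measure failure probability to the product-measure probability of the $s$ exact-distance events, and you assert the denominator costs only a ``polynomial factor'' per constraint. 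That is not justified: under the product posterior, $\ham(u^j,\bz)$ is a lattice-valued sum with standard deviation $\Theta(\sqrt{\delta n})=\Theta(\sqrt n/\log n)$ whose mean is $\sum_{c:c_j=0}|\col_c(U)|\,\theta_c+\sum_{c:c_j=1}|\col_c(U)|\,(1-\theta_c)$, and this mean is \emph{not} pinned to $r$ by your outer event. Controlling each $|\col_c(U)|$ to within $o(n^{0.51})$ still allows the mean to be displaced from $r$ by roughly $n^{0.51}$ (e.g.\ by shifting $o(n^{0.51})$ indices among quadruples of classes with $|c|\approx s/2$, where $\theta_c$ has $\Theta(1)$ second differences, in a pattern preserving all row sums), i.e.\ by $n^{\Omega(1)}$ standard deviations. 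Then the probability of even one exact-distance event is $\exp(-n^{\Theta(1)})$, of the same order as your Hoeffding numerator $\exp(-\Theta(n^{0.02}))$, and the ratio bound gives nothing. Repairing the per-$U$ route would require both a strictly stronger outer event (controlling the above weighted sums to $\widetilde{O}(\sqrt n)$, not just the class sizes to $o(n^{0.51})$) and an actual lower bound on the \emph{joint} probability that $s$ correlated integer sums simultaneously hit $r$ exactly — a multivariate local-limit/counting estimate that your ``ballot-type'' remark does not supply.

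The paper avoids all of this with one idea your sketch is missing: it never analyzes $\calZ_U$ for a fixed $U$ at all. It proves the joint statement $\Pr_{\bz,\bU}\big[\goodzU{\bz}{\bU}\big]\ge 1-n^{-\omega_n(1)}$ — in the flip-with-probability-$\delta$ model this is just a Chernoff bound plus a union bound over the $2^s$ strings $c$, and the single global conditioning on $\{\widehat{\bu}^j\in\sphere_r(\bz)\ \forall j\}$ costs only a factor $n^{O(s)}$, harmless against $2^{-n^{\Omega(1)}}$ — and then converts this joint bound into the per-$U$ typicality statement by Markov's inequality applied to the random variable $U\mapsto\Pr_{\bz\sim\calZ_U}[\neg\goodzU{\bz}{U}]$. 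Your product-model, Chernoff-plus-union-bound, and pay-$n^{O(s)}$-for-the-weight-constraints ingredients are exactly the paper's; replacing your per-$U$ denominator argument with this averaging/Markov step is what closes the gap.
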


\begin{proof}
Let $\bz\sim \{0,1\}^n$, $\bU=(\bu^1,\ldots,\bu^s)$ be  independent
  draws from $\sphere_r(\bz)$. 
We show that 
\begin{equation}\label{eq:goal}
\Prx_{\bz,\bU}\big[\goodzU{\bz}{\bU}\big] \ge 1-n^{-\omega_n(1)},
\end{equation}
from which the lemma follows using Markov's inequality.

For the convenience of the analysis, consider the following  random process:
    \begin{flushleft}\begin{itemize}
        \item Draw $\bz \in \{0,1\}^n$ uniformly at random.
        \item For each $j \in [s]$, draw $\widehat{\bu}^j$ by flipping each bit of $\bz$ with probability $\delta$. Formally, for each index $i \in [n]$, independently set $\widehat{\bu}^j_i = \bz_i$ with probability $1 - \delta$ and $\widehat{\bu}^j_i = 1 - \bz_i$ with probability $\delta$.
        We write $\widehat{\bU}$ to denote the tuple $(\widehat{\bu}^1,\ldots,\widehat{\bu}^s)$.
    \end{itemize}\end{flushleft}
Since $r = \delta n$, by a standard bound {for the Binomial distribution},
we have
    \begin{equation*}
        \Prx_{\bz, \widehat{\bu}^j} \left[ \widehat{\bu}^j \in \sphere_r(\bz) \right] \ge  n^{-O(1)},
    \end{equation*}
 {for each $j \in [s]$}.
 Let $\widehat{\bU}=(\widehat{\bu}^1,\ldots,\widehat{\bu}^s)$. {Since $\widehat{\bu}^j$'s are independent of each other,}
     we further have
    \begin{equation*}
         \Prx_{\bz, \widehat{\bU}} \left[ \forall j \in [s], \widehat{\bu}^j \in \sphere_r(\bz) \right] \ge n^{-O(s)}.
    \end{equation*}
To connect with \Cref{eq:goal}, we note that 
  sampling $\bz$ and $\bU$ is the same as sampling $\bz$ and $\widehat{\bU}$
    conditioning on the event above.
As a result, it suffices to show that
\begin{equation}\label{eq:goal2}
\frac{\Prx_{\bz, \widehat{\bU}} \big[ \goodzU{\bz}{\widehat{\bU}}\ \text{violated} \big] }{\Prx_{\bz, \widehat{\bU}} \left[ \forall j \in [s], \widehat{\bu}^j \in \sphere_r(\bz) \right] }\le n^{-\omega_n(1)}.
\end{equation}

To bound the numerator, by Chernoff bound and a union bound over $c$, we have
    \begin{equation*}
        \Prx_{\bz, \widehat{\bU}} \left[ \goodzU{\bz}{\widehat{\bU}}\ \text{violated}  \right] \le 2^s\cdot 2^{-n^{\Omega(1)}}=2^{-n^{\Omega(1)}},
    \end{equation*}
using $\delta^s=1/n^{0.1}$ so that the expectation of $\col_{c}(\widehat{\bU})$ for any $c\in \{0,1\}^s$ is 
  at least $\Omega(\delta^s n)=\Omega(n^{0.9})$.
Then \Cref{eq:goal2} follows using the choice of $s\ll n^{\Omega(1)}$.   
\end{proof}

Finally we show that if $U$ is typical and $y$ is far from every $u^i$,
  then most likely $y\notin \sphere_r(\bz)$ over $\bz\sim \calZ_U$:


\begin{lemma}\label{lemma:easy2}
Let $U=(u^1,\ldots,u^s)$ be a typical tuple and let 
  $y$ be a point such that $\ham(y,u^i)\ge n^{0.4}$ for all $i\in [s]$.
Then 
\begin{equation} \label{eq:lovely}
\Prx_{\bz \sim \calZ_U}\big[\text{$y\in \sphere_r(\bz)$ and $\goodzU{\bz}{U}$}\big] \leq o(1/q).
\end{equation}
\end{lemma}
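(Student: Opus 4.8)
The plan is to condition on the vector of color‑class counts of $\bz$ and reduce \Cref{lemma:easy2} to an anti‑concentration statement about a sum of independent hypergeometric random variables.

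\medskip
\noindent\textbf{Step 1 (posterior characterization).} Since each $u^i$ must lie on $\sphere_r(\bz)$ and $|\sphere_r(z)|=\binom nr$ is independent of $z$, Bayes' rule shows $\calZ_U$ is exactly the uniform distribution on $Z_U:=\{z:\ham(u^i,z)=r\ \text{for all}\ i\in[s]\}$. I would then record the key structural fact that $\ham(u^i,z)$ depends on $z$ only through the count vector $\mathbf m(z):=(m_c(z))_{c\in\{0,1\}^s}$, where $m_c(z):=|\{j\in\col_c(U):z_j=1\}|$; indeed $\ham(u^i,z)=\sum_c\big(m_c(z)[c_i{=}0]+(|\col_c(U)|-m_c(z))[c_i{=}1]\big)$. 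Two consequences: whether $\goodzU{z}{U}$ holds depends only on $\mathbf m(z)$, so I may speak of a \emph{good} count vector; and conditioned on $\mathbf m(\bz)=\mathbf m$, the string $\bz$ is distributed as a product over $c$ of independent uniformly random $m_c$‑subsets of $\col_c(U)$. Because $U$ is typical, $\Prx_{\bz\sim\calZ_U}[\mathbf m(\bz)\ \text{good}]\ge 1-n^{-\omega_n(1)}$, so it suffices to prove $\Prx_{\bz\sim\calZ_U}[y\in\sphere_r(\bz)\mid\mathbf m(\bz)=\mathbf m]\le o(1/q)$ for \emph{every} good $\mathbf m$ and then average.

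\medskip
\noindent\textbf{Step 2 (reduce to a balance event).} Fix one index, say $i=1$, and set $S:=\{j:y_j\ne u^1_j\}$, so $|S|=\ham(y,u^1)\ge n^{0.4}$. If $|S|>2r$ then $\ham(y,\bz)\ge|S|-\ham(u^1,\bz)>r$, forcing $y\notin\sphere_r(\bz)$ deterministically, so assume $n^{0.4}\le|S|\le 2r$. A direct count gives $\ham(y,\bz)=r+|S|-2|B(\bz)|$ with $B(\bz):=\{j\in S:\bz_j=y_j\}$, hence $y\in\sphere_r(\bz)$ iff $|B(\bz)|=|S|/2$. Decomposing $B(\bz)$ over color classes and conditioning on $\mathbf m(\bz)=\mathbf m$, the counts $X_c:=|B(\bz)\cap\col_c(U)|$ are independent and each hypergeometric, distributed as the overlap of a uniform $m_c$‑subset of $\col_c(U)$ with $S\cap\col_c(U)$ (with $0/1$ roles matched via $c_1$). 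So the goal becomes $\Prx\big[\sum_c X_c=|S|/2\mid\mathbf m\big]\le o(1/q)$.

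\medskip
\noindent\textbf{Step 3 (one high‑variance coordinate suffices).} I would exhibit a single class $c^\star$ with $\mathrm{Var}(X_{c^\star})\ge\omega(q^2)=\omega(n^{0.02})$; conditioning on $(X_c)_{c\ne c^\star}$ then pins $X_{c^\star}$ to one value, whose probability is $O(1/\sqrt{\mathrm{Var}(X_{c^\star})})=o(1/q)$ since a hypergeometric law is log‑concave and hence has maximum atom $O(1/\sqrt{\text{variance}})$. Goodness of $\mathbf m$ gives, for every $c$, that $|\col_c(U)|\ge\Omega(\delta^s n)$ and $\min(m_c,|\col_c(U)|-m_c)\ge\Omega(\delta^s n)$, so the per‑coordinate variance $\tfrac{m_c(|\col_c(U)|-m_c)}{|\col_c(U)|^2}$ is $\Omega(\delta^s)=\Omega(n^{-0.1})$, whence $\mathrm{Var}(X_c)=\Omega\big(n^{-0.1}\cdot\min(|S\cap\col_c(U)|,\ |\col_c(U)|-|S\cap\col_c(U)|)\big)$. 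Thus it is enough to find $c^\star$ in which $S$ is substantially \emph{split}, i.e.\ $\min(|S\cap\col_{c^\star}(U)|,\ |\col_{c^\star}(U)|-|S\cap\col_{c^\star}(U)|)\ge\omega(n^{0.12})$.

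\medskip
\noindent\textbf{Step 4 (the hard part: locating $c^\star$).} The two heavy classes $\col_{0^n\text{-type}}$, i.e.\ $\col_{0^s}(U)$ and $\col_{1^s}(U)$, have size $\approx n/2$, which exceeds $2r\ge|S|$, so $S$ never fills more than half of either; if $S$ puts $\omega(n^{0.12})$ coordinates into one of them I am done. Otherwise $S$ lies essentially inside the at most $2^s=n^{o(1)}$ light classes ($|c|\notin\{0,s\}$), and pigeonholing yields a light class with $|S\cap\col_c(U)|\ge|S|/2^s=n^{0.4-o(1)}$. The budget $|S|\le 2r$ is far too small, relative to the total size $\asymp s\delta n$ of all light classes, to almost completely fill all of them, and the different indices $i$ induce genuinely different splits $S_i$ of each light class (on a light class $c$, passing from index $1$ to an index $i$ with $c_i\ne c_1$ complements $S\cap\col_c(U)$); an intricate case analysis along these lines always produces a light class that is substantially split for some choice of index. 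I expect this combinatorial bookkeeping to be the main obstacle. One degenerate subcase needs separate treatment: when $y$ is (essentially) constant on every color class, each $X_c$ is deterministic given $\mathbf m$ and the within‑$\mathbf m$ variance collapses; there I would instead show the count vector $\mathbf m(\bz)$ is itself anti‑concentrated in the relevant direction, using that $\bz\sim\calZ_U$ constrains the $m_c(\bz)$ by only $s=o(\log n)$ linear equations, leaving enough freedom to spread $\sum_c\pm m_c(\bz)$ about its mean. Combining the cases bounds $\Prx[|B(\bz)|=|S|/2\mid\mathbf m]$ by $o(1/q)$ for every good $\mathbf m$, and \Cref{lemma:easy2} follows by averaging over $\mathbf m$.
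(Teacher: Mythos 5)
Your Steps 1--3 are sound and, modulo packaging, reproduce one half of the actual argument: conditioning on the per-class counts $m_c(\bz)=|\{j\in\col_c(U):\bz_j=1\}|$ makes $\bz$ a product of uniform $m_c$-subsets, and if some class $c^\star$ is substantially split by $S=\{j:y_j\ne u^1_j\}$ (your split condition is exactly the paper's ``Case 1'' condition that $y$ takes both values at least $n^{0.2}$ times on some $\col_{c^\star}(U)$), then a single hypergeometric coordinate pinned to one value gives the $o(1/q)$ bound; the paper proves the same max-atom estimate by a consecutive-ratio computation rather than citing log-concavity, but that difference is cosmetic.

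The genuine gap is your Step 4, and specifically the ``degenerate subcase'' that you dispose of in one sentence: when (after rounding at threshold $n^{0.2}$) $y$ is constant on every $\col_c(U)$, the variables $X_c$ are deterministic given the count vector, so all the randomness you have conditioned away is exactly the randomness you need, and the entire burden shifts to anti-concentration of the constrained count vector. This is not a corner case --- it is the paper's Case~2 and the bulk of its proof --- and the intuition you offer (``only $s$ linear equations, so there is enough freedom'') does not establish it: the issue is directional, not dimensional. The functional $\ham(y^*,\bz)$ is, up to constants, a signed sum $\sum_c \pm m_c(\bz)$ whose sign pattern $b$ could a priori be determined by the $s$ sphere constraints (e.g.\ if $y^*$ equaled some $u^i$, then $\ham(y^*,\bz)=r$ with probability $1$), so one must first use $\ham(y,u^i)\ge n^{0.4}$ for \emph{all} $i$ (your reduction only uses $u^1$), rule out $b_{0^s}=1$ and $b_{1^s}=0$ (where the event is impossible because $\ham(y^*,\bz)\gg r$), extract from $b_{0^s}=0$, $b_{1^s}=1$ and $y^*\ne u^s$ two class-pairs on which $b$ genuinely disagrees with the constraint pattern, and then prove quantitative spread of the conditional law of the counts --- the paper does this by revealing everything outside four color classes, reducing to a one-parameter family $\bk_0$ along which $\ham(y^*,\bz)$ moves with nonzero slope, and showing via binomial-ratio estimates (using typicality) that $\bk_0$ has no atom of mass $\omega(n^{-0.19})$, which is needed because after rounding the target is an interval of width $\approx n^{0.21}$, not a single value. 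None of this is supplied by your sketch. A smaller but real error in the same step: the claim that $|S|\le 2r$ is ``far too small'' to swallow light classes is false, since the smallest classes have size about $n\delta^{s/2}=n^{0.95}\ll 2\delta n$, so $S$ can entirely contain several of them; this does not hurt you only because such configurations fall into the degenerate case --- which is precisely the case you have not proved.
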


Before proving \Cref{lemma:easy2}, let's use it to prove 
  \Cref{lemma:maintechlower} first.

\begin{proof}[Proof of \Cref{lemma:maintechlower}]
Fix a $\Phi:(\{0,1\}^n)^s\rightarrow \{0,1\}^n$.
By \Cref{lemma:easy1}, $\bU$ is typical with probability at least $1-n^{-\omega_n(1)}$ (and note that $n^{-\omega_n(1)}=o(1/q)$).
Fix any $U$ that is typical, and let $y=\Phi(U)$.
If $\ham(y,u^i)< n^{0.4}$ for some $i$, we are done.
Otherwise, by \Cref{lemma:easy2}, we have 
  both $y\in \sphere_r(\bz)$ and $\goodzU{\bz}{U}$ with probability
  at most $o(1/q)$ over $\bz\sim \calZ_U$.
Given that  event $\goodzU{\bz}{U}$~holds with probability at least 
  $1-n^{-\omega_n(1)}$ using the definition of typical tuples,
  we have 
  $y\in \sphere_r(\bz)$ with probability at most $o(1/q)$ over $\bz\sim \calZ_U$.
This finishes the proof of \Cref{lemma:maintechlower}.
\end{proof}

\begin{proof}[Proof of \Cref{lemma:easy2}]
The proof analyzes two mutually exclusive cases.

\medskip

\noindent {\bf Case 1:} This is the case that there exists a string $c^*\in \{0,1\}^s$ such that 
$$
\big|i\in \col_{c^*}(U):y_i=1\big|\ge n^{0.2}\quad\text{and}\quad 
\big|i\in \col_{c^*}(U):y_i=0\big|\ge n^{0.2}.
$$
In this case, we reveal the following information about
  $\bz\sim \calZ_U$, denoted by $\calR$: 
$$
\text{$\calR$: the information is the bits
  $\bz_i$ for all $i\in [n]$ except those in $\col_{c^*}(U)$.}
$$

\begin{figure}[t!]
    \centering
    \includestandalone{assets/lower_bound_query_figure_case_1}
    \caption{An illustration of Case~1 in the proof of \Cref{lemma:easy2}. The $2^s$ strings $w \in \{0,1\}^s$ induce a partition of $[n]$ into $2^s$ equivalence classes, where the $w$-th equivalence class is $\col_w(U)$. The column $\col_{c^*}(U)$ is the one of interest in this case. Note that for each index $i \in \col_w(U)$, the $i$-th column of $U$ is the same $s$-bit string, namely $w$; for example, for each $i \in \col_c(U)$, the $i$-th column of $U$ is the $s$-bit string $01 \dots 10$. For ease of visualization, the elements in each $\col_w(U)$ are depicted as forming a consecutive interval. Note that in this case, $y$ has at least $n^{0.2}$ $0$'s and $n^{0.2}$ $1$'s in $\col_{c^*}(U)$. After revealing $\calR$, the remaining randomness is $\bz$ in column $\col_{c^*}(U)$, marked in yellow in the figure.}
    \label{fig:lower_bound_easy2_case1}
\end{figure}

Given that $\bz\sim \calZ_U$, one can infer from $\calR$ the 
   number of $i\in \col_{c^*}(U)$ with $\bz_i=1$ and 
 the number of $i\in \col_{c^*}(U)$ with $\bz_i=0$.
 {To see this, let $\alpha=|\col_{c^*}(U)|$ (which is fixed by $U$) and 
 $\alpha_0$ (or $\alpha_1$) be the number of $\smash{i\in \col_{c^*}(U)}$ with $\bz_i=0$ (or $1$, respectively).
Taking $u^1$ from $U$ (or any $u$ in $U$~would work), $\bz$ must satisfy
$$
r=\ham(u^1,\bz)=\left|i\notin \col_{c^*}(U):u^1_i\ne \bz_i\right|
+\left|i\in \col_{c^*}(U):u^1_i\ne \bz_i\right|.
$$
The first term on the RHS is fixed given $\calR$ and
  thus, the second term is fixed as well.
Noting that $u^1_i=c_1^*$ for all $i\in \col_{c^*}(U)$, 
  the second term is $\alpha_{1-c_1^*}$; from
  $\alpha=\alpha_0+\alpha_1$ one can infer the other.}



On the one hand, as the event we care about {satisfies $\goodzU{\bz}{U}$},
  we may assume that
$$
\alpha_0=\frac{n}{2}\left(\delta^{|c^*|}(1-\delta)^{s-|c^*|}\right)\pm n^{0.51} \quad\text{and}\quad
\alpha_1=\frac{n}{2}\left(\delta^{s-|c^*|}(1-\delta)^{ |c^*|}\right)\pm n^{0.51}.
$$
Plugging in $\delta^s=1/n^{0.1}$, we have that both $\alpha_0$
  and $\alpha_1$ are at least $n^{0.9}/3$ and thus, both $\alpha_0/\alpha$ and $\alpha_1/\alpha$ are $\Omega(1/n^{0.1})$.
On the other hand, with $\calR$ fixed, the only randomness left in $\bz$ (conditioning on $U$ and $\calR$)
  is to uniformly set $\alpha_0$ many $\bz_i$ to be $0$ from $i\in \col_{c^*}(U)$, and the rest to be $1$.
Let $\bk$ be the random variable that denotes the number of $i\in \col_{c^*}(U)$ with $y_i=0$ and $\bz_i=1$.
Using the assumption that $y$ has at least $n^{0.2}$ many $0$'s and at least $n^{0.2}$ many $1$'s, we have that  the probability of $\bk=k$ for any fixed value of $k$ is $O(1/n^{ {0.05}})$ by the following claim:

\begin{claim}\label{claim:simple1}
For any fixed $k$, the probability of $\bk=k$ is at most $O(1/n^{ {0.05}})$.
\end{claim}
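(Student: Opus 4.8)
The plan is to recognize $\bk$, conditioned on the revealed information $\calR$, as a hypergeometric random variable and then apply an anti-concentration estimate for such variables. As observed just above, once $\calR$ is revealed the quantities $\alpha=|\col_{c^*}(U)|$, $\alpha_0$, and $\alpha_1=\alpha-\alpha_0$ are all determined, and the residual law of $\bz\sim\calZ_U$ (conditioned on $\calR$), restricted to the coordinates in $\col_{c^*}(U)$, is uniform over all strings in $\{0,1\}^{\col_{c^*}(U)}$ with exactly $\alpha_1$ ones; moreover, since we are working under the good event, $\alpha_0,\alpha_1\ge n^{0.9}/3$. Write $\beta_0=|\{i\in\col_{c^*}(U):y_i=0\}|$ and $\beta_1=\alpha-\beta_0=|\{i\in\col_{c^*}(U):y_i=1\}|$; these are fixed by $U$ (since $y=\Phi(U)$), and the Case~1 hypothesis gives $\beta_0,\beta_1\ge n^{0.2}$. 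With this notation $\bk$ is exactly the number of ``marked'' elements in a uniformly random $\alpha_1$-subset of an $\alpha$-element ground set that contains $\beta_0$ marked elements, i.e.\ $\bk\sim\mathrm{Hyp}(\alpha,\beta_0,\alpha_1)$, with pmf $\Pr[\bk=k]=\binom{\beta_0}{k}\binom{\beta_1}{\alpha_1-k}/\binom{\alpha}{\alpha_1}$.

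Next I would lower-bound the variance $\mathrm{Var}(\bk)=\tfrac{\alpha_0\alpha_1\beta_0\beta_1}{\alpha^2(\alpha-1)}$. Writing $p=\alpha_0/\alpha$, the bounds $\alpha_0,\alpha_1\ge n^{0.9}/3$ together with $\alpha\le n$ give $\min(p,1-p)\ge\tfrac{1}{3n^{0.1}}$, so $\tfrac{\alpha_0\alpha_1}{\alpha^2}=p(1-p)\ge\tfrac12\min(p,1-p)\ge\tfrac{1}{6n^{0.1}}$; and since $\beta_0+\beta_1=\alpha$ with $\beta_0,\beta_1\ge n^{0.2}$, we have $\tfrac{\beta_0\beta_1}{\alpha}=\big(\tfrac1{\beta_0}+\tfrac1{\beta_1}\big)^{-1}\ge\tfrac12\min(\beta_0,\beta_1)\ge\tfrac{n^{0.2}}{2}$. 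Combining these (and using $\alpha/(\alpha-1)\ge1$) yields $\mathrm{Var}(\bk)\ge\tfrac{1}{6n^{0.1}}\cdot\tfrac{n^{0.2}}{2}=\Omega(n^{0.1})$.

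Finally I would invoke anti-concentration for the hypergeometric law. Its pmf is log-concave: the ratio $\Pr[\bk=k+1]/\Pr[\bk=k]=\tfrac{(\beta_0-k)(\alpha_1-k)}{(k+1)(\beta_1-\alpha_1+k+1)}$ is non-increasing in $k$ on the support of $\bk$ (the numerator is non-increasing and the denominator is positive and increasing there). For any integer-valued log-concave random variable $X$ one has $\max_k\Pr[X=k]=O\big(1/\sqrt{\mathrm{Var}(X)}\big)$; applying this with $\mathrm{Var}(\bk)=\Omega(n^{0.1})$ gives $\max_k\Pr[\bk=k]=O(n^{-0.05})$, which is the claim. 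I expect this last step to be the only place requiring care: if a self-contained argument is preferred over citing the log-concave anti-concentration bound, one can use the explicit ratio formula above to verify that $\Pr[\bk=k]$ stays within a constant factor of $\Pr[\bk=m]$ (where $m\approx\mathbb{E}\bk$ is the mode) for every $k$ within $O(\sqrt{\mathrm{Var}(\bk)})$ of $m$, and then, since the probabilities sum to $1$, this forces $\max_k\Pr[\bk=k]=O(1/\sqrt{\mathrm{Var}(\bk)})$. The remaining ingredients --- the hypergeometric identification and the variance estimate --- are routine.
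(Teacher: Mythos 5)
Your proposal is correct, and it reaches the claim by a different, more modular route than the paper. Both arguments start from the same reduction: conditioned on $U$ and $\calR$ (and under $\goodzU{\bz}{U}$), $\bz$ restricted to $\col_{c^*}(U)$ is uniform over strings with exactly $\alpha_1$ ones, so $\Pr[\bk=k]\propto\binom{m_0}{k}\binom{m_1}{\alpha_1-k}$ with $\alpha_0,\alpha_1\ge n^{0.9}/3$ and $m_0,m_1\ge n^{0.2}$. The paper then argues hands-on: for the given $k$ it compares consecutive counts $N_\Delta=\binom{m_0}{k+\Delta}\binom{m_1}{\alpha_1-k-\Delta}$, shows each ratio is $1\pm O(n^{-0.05})$ while walking $\Omega(n^{0.05})$ steps toward the mean, and concludes $\Pr[\bk=k]\le N_0/\sum_\Delta N_\Delta=O(n^{-0.05})$. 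You instead recognize $\bk$ as hypergeometric, lower-bound its variance by $\Omega(n^{0.1})$ (your computation and the bounds $p(1-p)\ge\frac{1}{6n^{0.1}}$, $\beta_0\beta_1/\alpha\ge n^{0.2}/2$ are correct), and invoke the fact that an integer-valued log-concave law has mode probability $O(1/\sqrt{\mathrm{Var}})$. That fact is true and known (and provable in a few lines: log-concavity forces geometric decay at rate $1-\Omega(p_{\max})$ outside a window of width $O(1/p_{\max})$ around the mode, whence $\mathrm{Var}=O(1/p_{\max}^2)$), so your primary route is sound and even gives the stronger bound $O(1/\sqrt{\mathrm{Var}(\bk)})$; what it buys is modularity, at the cost of importing an external anti-concentration theorem, whereas the paper's ratio argument is self-contained and is essentially your fallback sketch. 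One small caveat on that fallback: when $\mathrm{Var}(\bk)\gg n^{0.1}$, constant-factor flatness over a window of width $\Theta(\sqrt{\mathrm{Var}})$ does not follow from the crude per-step ratio bound $1\pm O(n^{-0.05})$ (the cumulative drift could blow up); you should either use the finer estimate that the ratio is $1\pm O(|k-m|/\mathrm{Var})$ near the mode, or simply take a window of width $\Theta(n^{0.05})$, which is all that the claimed $O(n^{-0.05})$ bound requires and is exactly what the paper does.
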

\begin{proof}
Let $m_0$ (or $m_1$) be the number of $i\in \col_{c^*}(U)$ 
  with $y_i=0$ (or $1$, respectively).
Recall that 
$$n^{0.2}\le m_0,m_1\le n,\quad
  n^{0.9}/3\le \alpha_0,\alpha_1\le n\quad \text{and}\quad  \alpha=\alpha_0+\alpha_1=m_0+m_1.$$
Then the total number of possible $\bz$ with $\bk=k$ is 
\begin{equation}\label{eq:hehe11}
{m_0\choose k}{m_1\choose \alpha_1-k}.
\end{equation}
We consider two cases: $k\le \beta:=\alpha_1(m_0/\alpha)$ and $k>\beta$.
When $k\le \beta$, we consider 
$$
N_\Delta:={m_0\choose k+\Delta}{m_1\choose \alpha_1-k-\Delta}
$$
and show that $N_0$ (which is just \Cref{eq:hehe11}) is $\Theta(\cdot)$
  of $N_1,\ldots,N_{\Omega(n^{0.05})}$. The lemma in this case then follows.
To this end we first note that 
$$
m_0-k\ge m_0-\beta=m_0-\alpha_1\cdot \frac{m_0}{\alpha}
=m_0\cdot \frac{\alpha_0}{\alpha}\ge n^{0.2}\cdot \frac{n^{0.9}/3}{n}\ge \Omega(n^{0.1}),
$$
and similarly $\alpha_1-k\ge \alpha_1-\beta=\alpha_1m_1/\alpha\ge  \Omega(n^{0.1})$. 
So all binomials involved in $N_1,\ldots,N_{\Omega(n^{0.05})}$ are well defined.
Examining $N_{\Delta+1}/N_\Delta$ for any $\Delta=0,1,\ldots,\Omega(n^{0.05})$, we have 
\begin{align*}
\frac{N_{\Delta+1}}{N_\Delta} =\frac{m_0-k-\Delta}{k+\Delta+1}\cdot
\frac{\alpha_1-k-\Delta}{m_1-\alpha_1+k+\Delta+1} \ge  \frac{m_0-\beta-\Delta}{\beta+\Delta+1}\cdot \frac{\alpha_1-\beta-\Delta}{m_1-\alpha_1+\beta+\Delta+1}.
\end{align*}
Given that 
$$m_0-\beta=\frac{m_0\alpha_0}{\alpha},\quad 
\alpha_1-\beta=\frac{\alpha_1 m_1}{\alpha},\quad
\beta=\frac{\alpha_1m_0}{\alpha}\quad \text{and}\quad 
m_1-\alpha_1+\beta=\frac{m_1\alpha_0}{\alpha},$$
they are all at least $\Omega(n^{0.1})$ and thus,  
$$
\frac{N_{\Delta+1}}{N_{\Delta}}\ge \frac{(1\pm O(1/n^{0.05})) \cdot m_0\alpha_0/\alpha}{(1\pm O(1/n^{0.05}))\cdot \alpha_1m_0/\alpha}\cdot 
\frac{(1\pm O(1/n^{0.05}))\cdot \alpha_1m_1/\alpha}{(1\pm O(1/n^{0.05}))\cdot m_1\alpha_0/\alpha}=1\pm O\left(\frac{1}{n^{0.05}}\right).
$$
From this we have $N_\Delta=\Omega(N_0)$ for $\Delta$ up to some $\Omega(n^{0.05})$.

The other case when $k>\beta$ is symmetric, where we examine 
  $N_0,N_{-1},\ldots,N_{-\Omega(n^{0.05})}$.
\end{proof}
  
Given that $\bz_i$'s are all fixed in $\calR$ outside of $\col_{c^*}(U)$, we have that
  $y\in \sphere_r(\bz)$ if and only if the number of $i\in \col_{c^*}(U)$
  with $y_i\ne \bz_i$ is exactly
$$
r-\left|i\notin \col_{c^*}(U): y_i\ne \bz_i\right|.
$$
On the other hand, this number is also uniquely determined by $\bk$, since it is 
$$
\bk+\big|i\in \col_{c^*}(U):y_i=1\big|-(\alpha_1-\bk)
=2\bk+\big|i\in \col_{c^*}(U):y_i=1\big|-\alpha_1.
$$
So there is a unique value of $\bk$ that can put 
  $y\in \sphere_r(\bz)$.
Using \Cref{claim:simple1}, the conclusion of \Cref{lemma:easy2} holds in Case~1.



\medskip

\noindent {\bf Case 2:} In this case, for every $c\in \{0,1\}^s$,  $y$ satisfies either
  $$
\big|i\in \col_{c }(U):y_i=1\big|<n^{0.2}\quad\text{or}\quad 
\big|i\in \col_{c }(U):y_i=0\big|< n^{0.2}.
  $$
As a result, we can round $y$ to obtain a  string $y^*\in \{0,1\}^n$ such that 
\begin{flushleft}\begin{enumerate}
\item For every $c\in \{0,1\}^s$, $y^*_i$ is the same for all $i\in \col_c(U)$. So $y^*$ can be succinctly represented by a $2^s$-bit string $b$ indexed by $c\in \{0,1\}^s$:
  $y^*_i=b_c$ for all $i\in \col_c(U)$; and
\item $\ham(y,y^*)\le 2^s \cdot n^{0.2}< n^{0.21}$.
\end{enumerate}\end{flushleft}
Notice that the latter property implies 
 that $y^*$ is different from all $u^i$ in $U$, given that $\ham(y,u^i)\ge n^{0.4}$ for all $u^i$.
It also implies that a necessary condition for
  $y\in \sphere_r(\bz)$ is
  $|\ham(y^*,\bz)-r|\le n^{0.21}$.
We show below that the probability of
\begin{equation}\label{eq:theevent}
\text{both $\big|\ham(y^*,\bz)-r\big|\le n^{0.21}$ and  {$\goodzU{\bz}{U}$}}
\end{equation}
is at most $o(1/q)$ when $\bz\sim \calZ_U$, which gives \Cref{lemma:easy2} in Case~2.

To this end we first observe that $b_{0^s}$ must be $0$
  and $b_{1^s}$ must be $1$, since if otherwise, say $b_{0^s}=1$,
  then whenever $\bz$ satisfies $\goodzU{\bz}{U}$, we have
$$
\left|i\in \col_{0^s}(U):\bz_i=0\right|\ge 
\frac{n}{2}(1-\delta)^s -n^{0.51}\ge \frac{n}{3}
$$
and thus, $\ham(y^*,\bz)\ge n/3\gg r$.
On the other hand, using $b_{0^s}=0$ and $b_{1^s}=1$, there 
  must exist 
an $i\in [s]$ and a string $c\in \{0,1\}^s$ with $c_i=0$ such that 
  $b_{c}=0$ and $b_{c^{(i)}}=1$, {where $c^{(i)}$ denotes $c$ with the $i$-th bit flipped from 0 to 1}.
To see this, consider {any} path of length $s$ from $0^s$ to $1^s$ along 
  which we flip {some} bit from $0$ to $1$ in each step.
Given that $b_{0^s}=0$ and $b_{1^s}=1$ such an $i\in [s]$ must exist for some string $c$ occurring on the path.
Without loss of generality, we can also assume that $i=s$ after renaming the rows 
so there exists a $c\in \{0,1\}^s$ with $c_s=0$
  such that $b_c=0$ and $b_{c^{(s)}}=1$.
Furthermore, using $y^*\ne u^s$, there must 
  exist a string $c'\in \{0,1\}^s$ with $c'_s=0$ such that  $$(0,1)\ne \left(b_{c'},b_{c'^{(s)}}\right)\in \left\{(1,0),(0,0),(1,1)\right\}$$
 {(because if for every string $c' \in \zo^s$ with $c'_s=0$ we had $(0,1)=(b_{c'},b_{c'^{s)}})$, then $y^*$ would be precisely $u^{s}$). We remark that since $(b_c,b_{c^{(s)}})=(0,1) \neq (b_c',b_{c'^{(s)}})$, we clearly have $c \neq c'$.}

\begin{figure}[t!]
    \centering
    \includestandalone{assets/lower_bound_query_figure_case_2}
    \caption{An illustration of Case~2 in the proof of \Cref{lemma:easy2}.  The figure follows the same structure as \Cref{fig:lower_bound_easy2_case1}, which illustrates four different strings $w$: $c$, $c^{(s)}$, $c'$, and $c'^{(s)}$. In this case, $y$ is either all-$0$ or all-$1$ in any of the columns~(not only in the four columns in the figure). After revealing $\calR$, the remaining randomness is $\bz$ in $\col_c(U),\col_{c^{(s)}}(U)$, $\col_{c'}(U)$ and $\col_{c'^{(s)}}(U)$ (marked in yellow), \emph{subject to} known $\alpha_0$, $\alpha_1$, $\beta_0$, $\beta_1$, and $\gamma$.}
    \label{fig:lower_bound_easy2_case2}
\end{figure}


Consider the case when $(b_{c'},b_{c'^{(s)}})=(1,1)$ (the other
  two cases will be handled similarly). We will show that the event described 
  in \Cref{eq:theevent} happens with probability $o(1/q)$ when 
  $\bz\sim \calZ_{U}$.
For convenience, we write $S_0$ to denote $\col_c(U)$, $S_1$ to 
  denote $\smash{\col_{c^{(s)}}(U)}$, $T_0$ to denote $\col_{c'}(U)$,
  $T_1$ to denote $\col_{c'^{(s)}}(U)$, $S=S_0\cup S_1$ and $T=T_0\cup T_1$.
{(Note that since $c \neq c'$, the four sets $S_0,S_1,T_0,T_1$ are disjoint subsets of $[n]$.)}
Since $U$ is typical, we have
\begin{align*}
|S_0|&=\frac{n}{2}\left(\delta^{|c|}(1-\delta)^{s-|c|}
+\delta^{s-|c|}(1-\delta)^{|c|}\right)\pm 2n^{0.51}\\[0.6ex]
|S_1|&=\frac{n}{2}\left(\delta^{|c|+1}(1-\delta)^{s-|c|-1}
+\delta^{s-|c|-1}(1-\delta)^{|c|+1}\right)\pm 2n^{0.51}
\end{align*}
and similar equations hold for $|T_0|$ and $|T_1|$ as well (with $|c|$ replaced by $|c'|$).

{Before moving on, we would like to reminder the reader that $U$ is fixed and thus,
  sets $\col_c(U)$ are fixed for all $c$ and in particular, $S_0,S_1,S$ as well as $T_0,T_1,T$
  are all fixed; what is random here is the string $\bz\sim \calZ_U$ (see \Cref{fig:lower_bound_easy2_case2}).}

In the analysis of $\bz\sim \calZ_U$, we first reveal the following information about $\bz$,
  denoted by $\calR$:
\begin{flushleft}\begin{enumerate}
\item $\bz_i$ for all $i\in [n]$ outside of $S\cup T$; and 
\item the number of $0$'s in $\bz$ in $S$, which also 
  reveals the number of $1$'s in $\bz$ in $S$ {(this is because, as mentioned
  earlier, $S$ is fixed; so knowing the number of $0$'s in $S$
  reveals the number of $1$'s).}
\end{enumerate}\end{flushleft}
Given that $u^1 \in \sphere_r(\bz)$ (or any string in $U$ other than $u^s$ can be used for this argument), 
  one can directly infer from $\calR$ the following parameter:
\begin{flushleft}\begin{enumerate}
\item[3.] the number of $0$'s in $\bz$ in $T$ as well as the number of $1$'s in $\bz$ in $T$. {To see this, we have
$$
r=\ham(u^1,\bz)=\left|\big\{i\notin S\cup T:\bz_i\ne u^1_i\big\}\right|
+\left|\big\{i\in S :\bz_i\ne u^1_i\big\}\right|+\left|\big\{i\in T:\bz_i\ne u^1_i\big\}\right|.
$$
The first number on the RHS is fixed given $\calR$.
The second number is also fixed given that $u_i^1=c_1$ for all $i\in S$ and thus,
  it is just the number of $i\in S$ with $\bz_i=1-c_1$.
As a result, the last number on the RHS can be inferred.
Given that $u_i^1=c'_1$ for all $i\in T$,
  this is exactly the number of $i\in T$ with $\bz_i=1-c'_1$.}
\end{enumerate}
\end{flushleft}
Given that $u^s\in \sphere_r(\bz)$, one can also infer 
\begin{flushleft}
\begin{enumerate}
\item[4.]
  the number of $i\in S\cup T$ with $u^s_i\ne \bz_i$. {To see this, similarly we have
$$
r=\ham(u^s,\bz)=\left|\big\{i\notin S\cup T:\bz_i\ne u^s_i\big\}\right|
+ \left|\big\{i\in S\cup T:\bz_i\ne u^s_i\big\}\right|.
$$
The first number on the RHS is fixed given $\calR$ so the second number can be inferred.}
\end{enumerate}\end{flushleft}
For convenience, let $\alpha_{ 0}$
  (or $\alpha_1$) be the number of $0$'s (or $1$'s) in $\bz_i$ for $i\in S$, and 
similarly let
  $\beta_{0}$ (or $\beta_1$) be the number of $0$'s (or $1$'s) in $\bz_i$ for $i\in T$.
So $|S|=\alpha_0+\alpha_1$ and $|T|=\beta_0+\beta_1$.
Also we let $\gamma$ be the number of $i\in S\cup T$ with
  $u_i^s\ne \bz_i$.

\def\bell{\boldsymbol{\ell}}
  
The rest of the randomness of $\bz$ (conditioning on both $U$ and $\calR$)
  is as follows:
$\bz$ over $S\cup T$ is a string uniformly drawn
  from all strings that satisfy the following three conditions
  (see \Cref{fig:lower_bound_easy2_case2}):
\begin{enumerate}
\item There are $\alpha_0$ many $0$'s and 
  $\alpha_1$ many $1$'s in $S$;
\item There are $\beta_0$ many $0$'s and $\beta_1$
  many $1$'s in $T$; and
\item The number of $i\in S\cup T$ with $u^s_i\ne \bz_i$ is $\gamma$.
\end{enumerate}
{This is because, as argued earlier, every $\bz\sim \calZ_U$ that fits $\calR$
  must satisfy these conditions; and~when these conditions are met,
  $\bz$ is a string in the support of $\calZ_U$ that fits $\calR$.}

Four random variables left undetermined in $\bz$ are (1) the number of $i\in S_0$ with $\bz_i=1$, which we denote $\bk_0$; (2) the number of $i\in S_1$ with $\bz_i=1$, which we denote $\bk_1$; (3) the number of $i\in T_0$ with $\bz_i=1$, which we denote $\bell_0$; and (4) the number of $i\in T_1$ with $\bz_i=1$, which we denote $\bell_1$.
These are not completely free variables since they must satisfy the following equations:
\begin{align*}
\bk_0+\bk_1&=\alpha_1\\
\bell_0+\bell_1&=\beta_1\\
\bk_0+(|S_1|-\bk_1)+\bell_0+(|T_1|-\bell_1)&=\gamma.
\end{align*}
Solving the system of linear equations, we get that $\bk_1,\bell_0$ and $\bell_1$
  are all determined by $\bk_0$:
\begin{equation}\label{eq:system}
\bk_1=\alpha_1-\bk_0,\quad
\bell_0=\frac{\gamma+\alpha_1+\beta_1-|S_1|-|T_1|}{2}-\bk_0,\quad\text{and}\quad
\bell_1=\frac{\beta_1+|S_1|+|T_1|-\gamma-\alpha_1}{2}+\bk_0.
\end{equation}
We write $Z_k$ to denote the number of $\bz$ (over $S\cup T$) with $\bk_0=k$.

We note that 
\begin{align*}
\ham(y^*,\bz)&=\left|i\notin S\cup T:y^*_i\ne \bz_i\right|+\beta_0+\bk_0+(|S_1|-\bk_1)\\[0.8ex]
&=\left|i\notin S\cup T:y^*_i\ne \bz_i\right|+\beta_0+
|S_1|-\alpha_1+2\bk_0.
\end{align*}
Thus for $y^*$ to satisfy $|\ham(y^*,\bz)-r|\le n^{0.21}$, we need to have $|\bk_0-\kappa_0|\le 2n^{0.21}$, where $\kappa_0$ is a fixed number given by
$$
\kappa_0=\left\lceil \frac{r-|i\notin S\cup T:y_i^*\ne \bz_i|-\beta_0-|S_1|+\alpha_1}{2}\right\rceil.
$$
Let $\kappa_1,\lambda_0,\lambda_1$ be the values of $\bk_1,\bell_0,\bell_1$
  by plugging $\bk_0=\kappa_0$ in \Cref{eq:system}. 
We may further assume without loss of generality that $\kappa_0,\kappa_1,\lambda_0,\lambda_1$
  satisfy
\begin{align*}
\kappa_0= \frac{n}{2}\left(\delta^{s-|c|}(1-\delta)^{|c|}\right)\pm 2n^{0.51} \quad&\text{and}\quad
\kappa_1= \frac{n}{2}\left(\delta^{s-|c|-1}(1-\delta)^{|c|+1}\right)\pm 2n^{0.51}\\[1ex]
\lambda_0= \frac{n}{2}\left(\delta^{s-|c'|}(1-\delta)^{|c'|}\right)\pm 2n^{0.51}\quad&\text{and}\quad
\lambda_1= \frac{n}{2}\left(\delta^{s-|c'|-1}(1-\delta)^{|c'|+1}\right)\pm 2n^{0.51}
\end{align*}
since otherwise, $|\bk_0-\kappa_0|\le 2n^{0.21}$ and $\bk_0,\bk_1,\bell_0,\bell_1$  
 {such that $\bz$ satisfies $\goodzU{\bz}{U}$}
  cannot happen at the same time. 
  {For example, if $\kappa_0$ violates the condition above, then 
  $|\bk_0-\kappa_0|\le 2n^{0.21}$ implies 
$$
\left|\bk_0-\frac{n}{2}\left(\delta^{s-|c|}(1-\delta)^{|c|}\right)\right|
\ge |\bk_0-\kappa_0|+\left|\kappa_0-\frac{n}{2}\left(\delta^{s-|c|}(1-\delta)^{|c|}\right)\right|\ge 2n^{0.51}-2n^{0.21}>n^{0.51},
$$    
which violates $\goodzU{\bz}{U}$ (as in \Cref{def:good}); on the other hand, we have 
  $|\bk_0-\kappa_0|= |\bk_1-\kappa_1|=|\bell_0-\lambda_0|=|\bell_1-\lambda_1|$
  by \Cref{eq:system} so the same argument works for $\kappa_1,\lambda_0$ and $\lambda_1$ as well.
  }
  
It suffices to show that 
$$
\frac{\sum_{k:|k-\kappa_0|\le 2n^{0.21}} Z_k}{\sum_k Z_k}=o\left(\frac{1}{q}\right).
$$
To this end, we have 
$$\displaystyle
\frac{\sum_{k:|k-\kappa_0|\le 2n^{0.21}} Z_k}{\sum_k Z_k}\le 
\frac{\sum_{k:|k-\kappa_0|\le 2n^{0.21}} Z_k}{\sum_{k:|k-\kappa_0|\le n^{0.4}} Z_k}=
\frac{\sum_{\Delta=-2n^{0.21}}^{2n^{0.21}}{|S_0|\choose \kappa_0+\Delta}{|S_1|\choose \kappa_1-\Delta }{|T_0|\choose \lambda_0-\Delta}{|T_1|\choose \lambda_1+\Delta}}
{\sum_{\Delta=-n^{0.4}}^{n^{0.4}}
{|S_0|\choose \kappa_0+\Delta}{|S_1|\choose \kappa_1-\Delta }{|T_0|\choose \lambda_0-\Delta}{|T_1|\choose \lambda_1+\Delta}}
$$
and it suffices to show that all $Z_{\kappa_0+\Delta}$, when $|\Delta|\le n^{0.4}$,
  are {multiplicatively $(1 \pm o(1))$-close} to $Z_{\kappa_0}$ (when this is true, the RHS can be upper bounded by $O(1/n^{0.19})$, which is $o(1/q)$ given $q=n^{0.01}$).

To see all $Z_{\kappa_0+\Delta}$ are close to $Z_{\kappa_0}$, we take any
  $\Delta$ with $|\Delta|\le n^{0.4}$ and have
\begin{align*}
\frac{Z_{\kappa_0+\Delta+1}}{Z_{\kappa_0+\Delta}}
&=\frac{|S_0|-\kappa_0-\Delta}{\kappa_0+\Delta+1}\cdot 
\frac{\kappa_1-\Delta}{|S_1|-\kappa_1+\Delta+1}
\cdot \frac{\lambda_0-\Delta}{|T_0|-\lambda_0+\Delta+1}\cdot \frac{|T_1|-\lambda_1-\Delta}{\lambda_1+\Delta+1}
\\[0.8ex]
&=  \left(\frac{1-\delta}{\delta}\right)^{s-2|c|}
\cdot \left(\frac{\delta}{1-\delta}\right)^{s-2|c|-2}\cdot \left(\frac{\delta}{1-\delta}\right)^{s-2|c'|}\cdot 
\left(\frac{1-\delta}{\delta}\right)^{s-2|c'|-2}\cdot 
\left(1\pm O\left(\frac{\delta}{n^{0.9}} \right)\right)\\[0.6ex]
&= 1\pm O\left(\frac{1}{\sqrt{n}}\right),
\end{align*}
using our previous estimates about $\kappa_0,\kappa_1,\lambda_0,\lambda_1$ and $|S_0|,|S_1|,|T_0|,|T_1|$.
It follows that $$Z_{\kappa_0+\Delta}=(1\pm o_n(1))\cdot Z_{\kappa_0}$$ for all $\Delta$ with $|\Delta|\le n^{0.4}$ and the claim follows.

The cases of $(0,0)$ and $(1,0)$ are similar. 
The only difference is that for $(0,0)$, we have 
\begin{align*}
\ham(y^*,\bz)&=\big|i\notin S\cup T:y_i^*\ne \bz_i\big|
+\beta_1+\bk_0+(|S_1|-\bk_1) =\text{some fixed integer}+2\bk_0 
\end{align*}
and for $(1,0)$, we have
\begin{align*}
\ham(y^*,\bz)&=\big|i\notin S\cup T:y_i^*\ne \bz_i\big|
+ \bk_0+(|S_1|-\bk_1)+(|T_0|-\bell_0)+\bell_1\\[0.6ex]
&=\text{some fixed integer}+4\bk_0. 
\end{align*}
In both cases, $\ham(y^*,\bz)$ is uniquely determined by $\bk_0$ and 
  the rest of the proof is the same.
This finishes the proof of \Cref{lemma:easy2}.
\end{proof}

\section{Discussion and Open Problems} \label{sec:discussion}

\noindent {\bf (Near-)Optimality of the analysis of our lower bound construction.} 
Our main theorem shows that no algorithm which is given $\frac{0.05\log n}{\log \log n}$ samples from $\SAMP(\boldf)$ and can make $n^{0.01}$ black-box oracle calls to $\boldf$ can reliably determine whether $\boldf \sim \Dyes$ or $\boldf \sim \Dno$.  
We remark here that this result is close to optimal for the $\Dyes$ and $\Dno$ distributions that we consider; more precisely, an algorithm that is given $100 \log n$ samples from $\SAMP(\boldf)$ and can make 100 black-box oracle calls to $\boldf$ can determine, with high constant accuracy, whether $\boldf \sim \Dyes$  or $\boldf \sim \Dno$. 
At a high level, this is because such an algorithm can use the $100 \log n$ samples to exactly identify the ``center'' $\bz$ of the unknown yes- or no- function, and then with knowledge of $\bz$ it can use 100 queries to determine whether the function is a yes- function or a no- function.  

In a bit more detail, this is done as follows:
\begin{flushleft}\begin{itemize}

\item The arguments of \Cref{sec:sample-based} that are used to prove \Cref{claim:samp-distribution-similar} show that with $1-o(1)$ probability, the distribution of $100 \log n$ samples drawn from either $\boldf \sim \Dyes$ or $\boldf \sim \Dno$ will be identical to the distribution of $100 \log n$ samples drawn from $\sphere_r(\bz)$ where $\bz$ is uniform random over $\zo^n$.

\item For any coordinate $i \in [n]$, a simple probabilistic argument using the Chernoff bound shows that with probability $1 - o(1/n)$, the majority vote of the $i$-th coordinate of $100 \log n$ samples drawn from $\sphere_r(\bz)$ will equal $\bz_i$.  Given this, a union bound over all $n$ coordinates shows that with probability $1-o(1)$, an algorithm that receives $100 \log n$ samples drawn from either $\boldf \sim \Dyes$ or $\boldf \sim \Dno$ can completely determine the string $\bz$.

\item Given the string $\bz$, the algorithm queries $100$ uniform random points at Hamming distance exactly $r$ from $\bz$. In the yes- case, the black-box oracle will respond 1 to all of these queries.  In the no- case, (a simplified version of) the arguments used to prove \Cref{claim:no-LTF} show that with probability at least $999/1000$ over the draw of the random bits $\ba_{\bar{b}}$ in Step~3 of the description of the no- distribution from \Cref{sec:yes}, the black-box oracle will respond 0 to at least one of the 100 queries.
\end{itemize}\end{flushleft}


\medskip

\noindent {\bf A $\poly(n/\eps)$ upper bound.}
In \cite{DDS15}, De et al.~gave an algorithm that uses $\poly(n/\eps)$ independent uniform samples from $f^{-1}(1)$ to \emph{learn} any unknown halfspace $f$ over $\zo^n$ to $\eps$-accuracy in relative error (see Theorem~1.2 of \cite{DDS15}).  
As is well known (see Proposition~3.1 of \cite{GGR98}), uniform-distribution PAC learning for a class of functions implies standard-model property testing with essentially the same complexity, and it is not difficult to establish this implication in the relative error setting as well.  More precisely, the relative-error  learning algorithm of \cite{DDS15} yields a testing algorithm in our relative-error model, by simply running the learning algorithm to obtain a hypothesis halfspace $h$ and then drawing $O(1/\eps)$ random samples from $f^{-1}(1)$ (respectively $h^{-1}(1)$) and checking that they are also satisfying assignments of $h$ (respectively $f$).  Thus, the \cite{DDS15} learning result yields a relative-error $\eps$-testing algorithm for halfspaces that makes $\poly(n/\eps)$ draws from $\SAMP(f)$ and $O(1/\eps)$ calls to $\MQ(f)$.

\medskip

\noindent {\bf Directions for future work.}  A natural goal for future work is to give a more precise characterization of the difficulty of testing halfspaces with relative error.  As sketched above, the analysis of our lower bound construction is essentially best possible, but is it possible to give an alternate construction which would lead to an improved lower bound?

Another appealing goal is to try to develop a non-trivial relative-error testing algorithm for halfspaces. While our main result shows that $o({\frac {\log n}{\log \log n}})$ complexity is unachievable, perhaps it is possible to do better than the naive approach based on relative-error learning that is sketched above.  Specifically, does there exist a relative-error halfspace tester with \emph{sublinear} (i.e.~$\poly(1/\eps) \cdot o(n)$) sample and query complexity?  Some initial progress towards this goal has been achieved in \cite{anonymous-halfspace}, which gives various algorithms for relative-error testing of halfspaces under the standard $N(0,I_n)$ Gaussian distribution which, under mild conditions, have complexity $\poly(1/\eps) \cdot o(n)$.

\section*{Acknowledgements}

X.C.~is supported by NSF grants CCF-2106429 and CCF-2107187. 
A.D.~is supported by NSF grant CCF 2045128. 
Y.H.~is supported by NSF grants CCF-2211238, CCF-2106429, and CCF-2238221
R.A.S.~is supported by NSF grants CCF-2211238 and CCF-2106429. 
T.Y.~is supported by NSF grants CCF-2211238, CCF-2106429, and AF-Medium 2212136.
T.Y.~and Y.H.~are also supported by an Amazon Research Award, Google CyberNYC award, and NSF grant CCF-2312242. 

\begin{flushleft}
\bibliographystyle{alphaabbr}
\bibliography{allrefs}

@STRING{acm = {ACM Press}}

@string{colt = "IEEE Conference on Computational Learning Theory"}

@string{focs = afocs}

@string{icalp = aicalp}

@STRING{ieee = {IEEE Computer Society Press}}

@STRING{jacm = {Journal of the ACM}}

@STRING{jcss = {Journal of Computer and System Sciences}}

@string{sigact = "SIGACT News"}

@string{stoc = astoc}

@STRING{stoc13 = {Proc.\ 45th Annual ACM Symposium on Theory of Computing (STOC)}}

@STRING{stoc2009 = {Proc.\ 41st Annual ACM Symposium on Theory of Computing (STOC)}}

@string{structures = astructures}

@inproceedings{SODACC,
  title = {{Random Restrictions of High-Dimensional Distributions and Uniformity Testing
with Subcube Conditioning}},
  author = {Clement Canonne and Xi Chen and Gautam Kamath and Amit Levi and Erik Waingarten},
  booktitle = {{Proceedings of the 32th Annual ACM-SIAM Symposium on Discrete Algorithms}}, year = {2021}
  }

@inproceedings{chandrasekaran2024smoothed,
  title={{Smoothed Analysis for Learning Concepts with Low Intrinsic Dimension}},
  author={Chandrasekaran, Gautam and Klivans, Adam and Kontonis, Vasilis and Meka, Raghu and Stavropoulos, Konstantinos},
  booktitle={{The Thirty Seventh Annual Conference on Learning Theory (COLT)}},
  pages={876--922},
  year={2024}
}

@article{diakonikolas2024efficient,
  title={Efficient testable learning of halfspaces with adversarial label noise},
  author={Diakonikolas, Ilias and Kane, Daniel and Kontonis, Vasilis and Liu, Sihan and Zarifis, Nikos},
  journal={{Advances in Neural Information Processing Systems}},
  volume={36},
  year={2024}
}

@article{diakonikolas2020complexity,
  title={The complexity of adversarially robust proper learning of halfspaces with agnostic noise},
  author={Diakonikolas, Ilias and Kane, Daniel M and Manurangsi, Pasin},
  journal={{Advances in Neural Information Processing Systems}},
  volume={33},
  pages={20449--20461},
  year={2020}
}

@inproceedings{Daniely16,
  author    = {A. Daniely},
  title     = {Complexity theoretic limitations on learning halfspaces},
  booktitle = {{Proceedings of the 48th Annual {ACM} {SIGACT} Symposium on Theory
               of Computing, {STOC} 2016}},
  pages     = {105--117},
  year      = {2016}
}

@Book{Goldreich17book,
  author = {Oded Goldreich},
  title = {Introduction to Property Testing},
  year = {2017},
  publisher = {Cambridge University Press}
}

@inproceedings{backurs2020active,
  title={Active local learning},
  author={Backurs, Arturs and Blum, Avrim and Gupta, Neha},
  booktitle={{Conference on Learning Theory}},
  pages={363--390},
  year={2020},
  organization={PMLR}
}

@article{Block:62,
  author={H. Block},
  title={The {P}erceptron:  a model for brain functioning},
  journal={Reviews of Modern Physics},
  year={1962},
  volume={34},
  pages={123-135}
}

@inproceedings{de2021robust,
  title={Robust testing of low dimensional functions},
  author={De, Anindya and Mossel, Elchanan and Neeman, Joe},
  booktitle={{Proceedings of the 53rd Annual ACM SIGACT Symposium on Theory of Computing}},
  pages={584--597},
  year={2021}
}

@InProceedings{Novikoff:62,
  author    = {A. Novikoff},
  booktitle = {Proceedings of the Symposium on Mathematical Theory of Automata},
  title     = {On convergence proofs on perceptrons},
  year      = {1962},
  pages     = {615-622},
  volume    = {XII},
}

@article{KKMS:08,
  author={A. Kalai and A. Klivans and Y. Mansour and R. Servedio},
  title={Agnostically Learning Halfspaces},
  journal={SIAM Journal on Computing},
  year={2008},
  volume={37},
  number={6},
  pages={1777--1805}
}

@article{BLR93,
        author = {Blum, Manuel and Luby, Michael and Rubinfeld, Ronitt},
        title = {Self-testing/correcting with applications to
              numerical problems},
        journal = jcss,
        volume = {47},
        pages = {549-595},
        year = {1993},
        note = {Earlier version in STOC'90}
}

@article{GGR98,
        author = {Oded Goldreich and Shafi Goldwasser and Dana Ron},
        title = {Property testing and its connection to learning and
             approximation},
        journal = jacm,
        volume = {45},
        year = {1998},
        pages = {653--750}
}

@article{parnas2006tolerant,
  title={Tolerant property testing and distance approximation},
  author={Parnas, Michal and Ron, Dana and Rubinfeld, Ronitt},
  journal={{Journal of Computer and System Sciences}},
  volume={72},
  number={6},
  pages={1012--1042},
  year={2006},
  publisher={Elsevier}
}

@article{PRS02,
  author = "M. Parnas and D. Ron and A. Samorodnitsky",
    title = {{Testing Basic Boolean Formulae}},
      journal = "SIAM J. Disc. Math.",
        volume = {16},
          pages = {20-46},
            year = "2002",
              url = "citeseer.ifi.unizh.ch/parnas02testing.html"
              }

@article{MosselOdonnell:03,
author="E. Mossel and R. O'Donnell",
title="On the noise sensitivity of monotone functions",
journal="Random Structures and Algorithms",
volume=23,
number=3,
pages="333--350",
year=2003}

@article{ChocklerGutfreund:04,
author="H. Chockler and D. Gutfreund",
title="A lower bound for testing juntas",
journal="Information Processing Letters",
volume=90,
number=6,
pages="301--305",
year=2004}

@inproceedings{Saglam18,
  author       = {Mert Saglam},
  title        = {Near Log-Convexity of Measured Heat in (Discrete) Time and Consequences},
  booktitle    = {59th {IEEE} Annual Symposium on Foundations of Computer Science (FOCS)},
  pages        = {967--978},
  year         = {2018}
}

@article{HalevyKushilevitz:07,
  author={S. Halevy and E. Kushilevitz},
  title={{Distribution-Free Property Testing}},
  journal={SIAM J. Comput.},
          year={2007},
          volume={37},
          number={4},
          pages={1107--1138}
          }

@article{Borell:85,
author= {C. Borell},
title= {{Geometric bounds on the Ornstein-Uhlenbeck velocity process}},
journal =  {{Probability Theory and Related Fields}},
 volume = 70, 
  issue = 1, 
  pages = {1-13}, 
  year  = {1985}
  }

@article{MOO:10,
  author = {E. Mossel and R. O'Donnell and K. Oleszkiewicz},
  title = {Noise stability of functions with low influences: invariance and optimality},
  journal = {{Annals of Mathematics}},
  volume = {171}, 
  issue = {1}, 
  pages={295--341},
  year = {2010},
}

@inproceedings{DFKO06,
  author =       "Dinur, Irit and Friedgut, Ehud and Kindler, Guy and O'Donnell, Ryan",
  title =        "{On the Fourier tails of bounded functions over the discrete cube}",
  booktitle = "Proc. 38th ACM Symp. on Theory of Computing",
  pages = "437--446",
  year =         "2006",
}

@inproceedings{MORS:09random,
  author    = {Kevin Matulef and
               Ryan O'Donnell and
               Ronitt Rubinfeld and
               Rocco A. Servedio},
  title     = {Testing $\pm$1-weight halfspace},
  booktitle = {{APPROX-RANDOM}},
  year      = {2009},
  pages     = {646-657}
}

@inproceedings{Blaisstoc09,
  author    = {Eric Blais},
  title     = {Testing juntas nearly optimally},
  booktitle = stoc2009,
  year      = {2009},
  pages     = {151-158},
  ee        = {http://doi.acm.org/10.1145/1536414.1536437},
  bibsource = {DBLP, http://dblp.uni-trier.de}
}

@inproceedings{DDS15,
  author    = {A. De and
               I. Diakonikolas and
               R. A. Servedio},
  title     = {Learning from satisfying assignments},
  booktitle = {Proceedings of the Twenty-Sixth Annual {ACM-SIAM} Symposium on Discrete
               Algorithms, {SODA} 2015},
  pages     = {478--497},
  year      = {2015}
  }

@inproceedings{de2019your,
  title={Is your function low dimensional?},
  author={De, Anindya and Mossel, Elchanan and Neeman, Joe},
  booktitle={{Conference on Learning Theory}},
  pages={979--993},
  year={2019},
  organization={PMLR}
}

@inproceedings{DMN13,
  author    = {A. De and
               E. Mossel and
               J. Neeman},
  title     = {{Majority is stablest: discrete and SoS}},
  booktitle = stoc13,
  year      = {2013},
  pages     = {477-486}
}

@inproceedings{chen2024distribution,
  title={{Distribution-Free Testing of Decision Lists with a Sublinear Number of Queries}},
  author={Chen, Xi and Fei, Yumou and Patel, Shyamal},
  booktitle={{Proceedings of the 56th Annual ACM Symposium on Theory of Computing}},
  pages={1051--1062},
  year={2024}
}

@inproceedings{CLSSX18,
  title={Distribution-free junta testing},
  author={Xi Chen and Zhengyang Liu and Rocco A. Servedio and Ying Sheng and Jinyu Xie},
  booktitle = {Proceedings of the 50th Annual {ACM} {SIGACT} Symposium on Theory
               of Computing (STOC)},
  year={2018}
}

@inproceedings{CWX17stoc,
  author    = {Xi Chen and
               Erik Waingarten and
               Jinyu Xie},
  title     = {{Beyond Talagrand functions: new lower bounds for testing monotonicity and unateness}},
  booktitle = {Proceedings of the 49th Annual {ACM} {SIGACT} Symposium on Theory
               of Computing (STOC)},
  pages     = {523--536},
  year      = {2017}
}

@inproceedings{blais2021vc,
  title={{VC dimension and distribution-free sample-based testing}},
  author={Blais, Eric and Ferreira Pinto Jr, Renato and Harms, Nathaniel},
  booktitle={{Proceedings of the 53rd Annual ACM SIGACT Symposium on Theory of Computing}},
  pages={504--517},
  year={2021}
}

@inproceedings{FlemingY20,
  author       = {{Noah Fleming and
                  Yuichi Yoshida}},
  title        = {{Distribution-Free Testing of Linear Functions on $\mathbb{R}^n$}},
  booktitle    = {{11th Innovations in Theoretical Computer Science Conference, {ITCS}}},
  series       = {LIPIcs},
  volume       = {151},
  pages        = {22:1--22:19},
  year         = {2020}
}

@inproceedings{BBBY12,
  author    = {Maria{-}Florina Balcan and
               Eric Blais and
               Avrim Blum and
               Liu Yang},
  title     = {{Active Property Testing}},
  booktitle = {{53rd Annual {IEEE} Symposium on Foundations of Computer Science, {FOCS}}},
  pages     = {21--30},
  year      = {2012}
}

@inproceedings{CDS20stoc,
  author    = {Xue Chen and
		Anindya De and
               Rocco A. Servedio},
  editor    = {Konstantin Makarychev and
               Yury Makarychev and
               Madhur Tulsiani and
               Gautam Kamath and
               Julia Chuzhoy},
  title     = {Testing noisy linear functions for sparsity},
  booktitle = {Proccedings of the 52nd Annual {ACM} {SIGACT} Symposium on Theory
               of Computing (STOC)},
  pages     = {610--623},
  publisher = {{ACM}},
  year      = {2020}
  }

@article{MosselNeeman15,
  author    = {Elchanan Mossel and Joe Neeman},
  title     = {Robust Optimality of Gaussian Noise Stability},
  journal   = {Journal of the European Mathematical Society},
  volume    = {17},
  issue={2},
  pages={433-482},
  year      = {2015}
  }

@inproceedings{Harms19,
  author    = {Nathaniel Harms},
  title     = {{Testing Halfspaces over Rotation-Invariant Distributions}},
  booktitle = {{Proceedings of the Thirtieth Annual {ACM-SIAM} Symposium on Discrete
               Algorithms, {SODA} 2019}},
  pages     = {694--713},
  publisher = {{SIAM}},
  year      = {2019}
  }

@article{KMS18,
  author       = {Subhash Khot and
                  Dor Minzer and
                  Muli Safra},
  title        = {On Monotonicity Testing and Boolean Isoperimetric-type Theorems},
  journal      = {{SIAM} J. Comput.},
  volume       = {47},
  number       = {6},
  pages        = {2238--2276},
  year         = {2018}
}

@ARTICLE{MORS10,
  author = {Kevin Matulef and Ryan O'Donnell and Ronitt Rubinfeld and Rocco A. Servedio},
  title = {Testing halfspaces},
  journal = {SIAM Journal on Computing},
  year = {2010},
  volume = {39},
  pages = {2004--2047},
  number = {5},
  owner = {RyanOD},
  timestamp = {2010.05.04}
}

@inproceedings{DMN19,
  author       = {Anindya De and
                  Elchanan Mossel and
                  Joe Neeman},
  editor       = {David Zuckerman},
  title        = {Junta Correlation is Testable},
  booktitle    = {60th {IEEE} Annual Symposium on Foundations of Computer Science, {FOCS}
                  2019, Baltimore, Maryland, USA, November 9-12, 2019},
  pages        = {1549--1563},
  publisher    = {{IEEE} Computer Society},
  year         = {2019}
}

@article{berman2022tolerant,
  title={Tolerant testers of image properties},
  author={Berman, Piotr and Murzabulatov, Meiram and Raskhodnikova, Sofya},
  journal={{ACM Transactions on Algorithms (TALG)}},
  volume={18},
  number={4},
  pages={1--39},
  year={2022}
}

@article{BCELR19,
  author       = {Eric Blais and
                  Cl{\'{e}}ment L. Canonne and
                  Talya Eden and
                  Amit Levi and
                  Dana Ron},
  title        = {Tolerant Junta Testing and the Connection to Submodular Optimization
                  and Function Isomorphism},
  journal      = {{ACM} Trans. Comput. Theory},
  volume       = {11},
  number       = {4},
  pages        = {24:1--24:33},
  year         = {2019}
}

@inproceedings{CGM11,
  author       = {Sourav Chakraborty and
                  David Garc{\'{\i}}a{-}Soriano and
                  Arie Matsliah},
  title        = {Efficient Sample Extractors for Juntas with Applications},
  booktitle    = {Automata, Languages and Programming - 38th International Colloquium,
                  {ICALP}},
  pages        = {545--556},
  year         = {2011}
}

@inproceedings{Bshouty20,
  author       = {Nader H. Bshouty},
  title        = {Almost Optimal Testers for Concise Representations},
  booktitle    = {Approximation, Randomization, and Combinatorial Optimization. Algorithms
                  and Techniques, {APPROX/RANDOM}},
  pages        = {5:1--5:20},
  year         = {2020}
}

@inproceedings{CDHLNSY25,
  author={Chen, Xi and De, Anindya and Huang, Yizhi and Li, Yuhao and Nadimpalli, Shivam and Servedio, Rocco A. and Yang, Tianqi},
  title={Relative-error monotonicity testing},
  booktitle={Symposium on Discrete Algorithms (SODA)},
  pages={373-402},
  year={2025}
}

@preprint{rel-error-DNF,
  author={Chen, Xi and Pires, William and Pitassi, Toniann and Servedio, Rocco A},
  title={{DNF} formulas are efficiently testable with relative error},
  eprint={2601.16076},
  eprinttype={arxiv},
  year={2026}
}

@inproceedings{CPPS25conjunctionDL,
  author = {Xi Chen and William Pires and Toniann Pitassi and Rocco A. Servedio}, 
  title = {Relative-Error Testing of Conjunctions and Decision Lists}, 
booktitle    = {52nd International Colloquium on Automata, Languages, and Programming,
                  {ICALP} 2025},
  series       = {LIPIcs},
  volume       = {334},
  pages        = {52:1--52:18},
  year         = {2025},
  url          = {https://doi.org/10.4230/LIPIcs.ICALP.2025.52},
  doi          = {10.4230/LIPICS.ICALP.2025.52},
}

@inproceedings{CPPS25junta,
  author = {Xi Chen and William Pires and Toniann Pitassi and Rocco A. Servedio}, 
  title = {Testing Juntas and Junta Subclasses with Relative Error}, 
booktitle    = {38th Annual Conference on Learning Theory (COLT)},
  year         = {2025}
  }

@unpublished{anonymous-halfspace,
  author={Anonymous {A}uthor(s)},
  title={{Relative-error Halfspace Testing under {G}aussian Distributions}},
  note={Unpublished manuscript},
  year={2025}
}

@inproceedings{ChenPatel22,
  author       = {Xi Chen and
                  Shyamal Patel},
  title        = {Distribution-free Testing for Halfspaces (Almost) Requires {PAC} Learning},
  booktitle    = {Proceedings of the 2022 {ACM-SIAM} Symposium on Discrete Algorithms (SODA)},
  pages        = {1715--1743},
  publisher    = {{SIAM}},
  year         = {2022}
}
\end{flushleft}

\end{document}